\newtheorem{theorem}{Theorem}
\newtheorem{corollary}{Corollary}[theorem]
\newtheorem{lemma}{Lemma}
\newtheorem{remark}{Remark}
\newtheorem{definition}{Definition}
\newtheorem{example}{Example}
\newcommand{\T}{\ensuremath{\mathrm{T}}} 
\newcommand{\norm}[1]{\ensuremath{\left\| #1\right\|}}
\newcommand{\pbra}[1]{\ensuremath{\left( #1\right)}}
\newcommand{\sbra}[1]{\ensuremath{\left[ #1\right]}}
\newcommand{\cbra}[1]{\ensuremath{\left\{ #1\right\}}}
\newcommand{\abra}[1]{\ensuremath{\left< #1\right>}}
\newcommand{\pder}[2]{\ensuremath{\frac{\partial #1}{\partial #2}}}
\newcommand{\E}[1]{\ensuremath{\mathbb{E}\left[ #1\right]}}
\newcommand{\Eh}[1]{\ensuremath{\mathbb{\hat E}\left[ #1\right]}}
\DeclareMathOperator*{\argmax}{arg\,max}
\DeclareMathOperator*{\argmin}{arg\,min}
\begin{document}

\title{Annealing Optimization for Progressive Learning with Stochastic Approximation}

\author{Christos N. Mavridis, \IEEEmembership{Member, IEEE}, and 
John S. Baras, \IEEEmembership{Life Fellow, IEEE}
\thanks{
The authors are with the 
Department of Electrical and Computer Engineering and 
the Institute for Systems Research, 
University of Maryland, College Park, USA.
{\tt\small emails:\{mavridis, baras\}@umd.edu}.}%
\thanks{Research partially supported by the 
Defense Advanced Research Projects
Agency (DARPA) under Agreement No. HR00111990027, 
by ONR grant N00014-17-1-2622, 
and by a grant from Northrop Grumman Corporation.%
}%
}

\maketitle
 \thispagestyle{empty}
\pagestyle{empty}

\begin{abstract}
In this work, we introduce a learning model designed to
meet the needs of applications in
which computational resources are limited, and robustness and
interpretability are prioritized.
Learning problems can be formulated as 
constrained stochastic optimization problems, 
with the constraints originating mainly 
from model assumptions that define a trade-off between 
complexity and performance.
This trade-off is closely related to 
over-fitting, generalization capacity, and robustness to 
noise and adversarial attacks, 
and
depends on both the structure and complexity of the model, 
as well as the properties
of the optimization methods used.
We develop an online prototype-based learning algorithm 
based on annealing optimization
that is formulated as an online gradient-free stochastic approximation algorithm.
The learning model can be viewed as an interpretable and 
progressively growing competitive-learning neural network model
to be used for supervised, unsupervised, and reinforcement learning.
The annealing nature of the algorithm contributes to
minimal hyper-parameter tuning requirements, 
poor local minima prevention, and 
robustness with respect to the initial conditions.
At the same time, it provides online control over the performance-complexity trade-off
by progressively increasing the complexity of the learning model as needed, 
through an intuitive bifurcation phenomenon.
Finally, the use of stochastic approximation enables the study 
of the convergence of the learning algorithm through 
mathematical tools from dynamical systems and control, 
and allows for its integration with reinforcement learning algorithms,
constructing an adaptive state-action aggregation scheme.
%
%
\end{abstract}

\begin{IEEEkeywords}
Optimization for machine learning,
progressive learning, 
annealing optimization,
online deterministic annealing,
stochastic approximation,
reinforcement learning.
\end{IEEEkeywords}

\section{Introduction}
\label{Sec:Introduction}

Learning from data samples has proven to be an important component 
in the advancement of diverse fields, including artificial intelligence,
computational physics, biological sciences, 
communication frameworks, and cyber-physical control systems. 
While virtually all learning problems can be formulated as 
constrained stochastic optimization problems, 
the optimization methods can be intractable,  
with the constraints originating mainly 
from model assumptions and defining a trade-off between 
complexity and performance \cite{bennett2006interplay}.
For this reason, designing models of appropriate structure, 
and optimization methods with particular properties, 
has been the cornerstone
of machine learning algorithms.

Currently, deep learning methods
dominate the field of machine learning
owing to their experimental performance in numerous applications
\cite{lecun2015deep}.
%
However, they typically consist of overly complex models 
of a great many parameters,
which comes in the expense of time, energy, data, memory, 
and computational resources \cite{thompson2020computational,strubell2019energy}. 
Furthermore, they are inherently uninterpretable 
and vulnerable to small perturbations
\cite{szegedy2013intriguing,carlini2017towards},
which has led to an emerging hesitation in 
their usage outside common benchmark datasets 
and real-life or security critical applications 
\cite{sehwag2019analyzing}.

In this work, we introduce a learning model designed to 
alleviate these limitations to meet the needs of 
applications in which computational resources are limited and robustness
and interpretability are prioritized.
To that end, the learning model should create a meaningful representation, 
should be updated recursively (and even in real time)
with easy-to-implement updates,
and its complexity should be appropriately and progressively adjusted 
to offer online control over the trade-off between 
model complexity and performance.
This trade-off is also closely related to 
over-fitting, generalization capacity, and robustness to 
input perturbations and adversarial attacks 
\cite{xu2012robustness}.
This is further reinforced by recent studies 
revealing that existing flaws in the current 
benchmark datasets may have inflated the need for overly
complex models \cite{northcutt2021pervasive}, and that
over-fitting to adversarial training examples 
may actually hurt generalization \cite{raghunathan2019adversarial}. 

We focus on prototype-based models 
\cite{mavridis2020convergence,mavridis2022online,biehl2016prototype}, 
which are iterative, consistent \cite{mavridis2020convergence}, interpretable, robust \cite{mavridis2022risk}, and topology-preserving
competitive-learning neural networks \cite{martin_topologyPreservationInSOM_2008},
sparse in the sense of memory complexity, fast to train and evaluate,
and have recently 
shown impressive robustness 
against adversarial attacks,
suggesting suitability in security critical 
applications \cite{saralajew2019robustness}. 
They use a set of representatives (typically called prototypes, or codevectors)
to partition the input space in an optimal way according to an 
appropriately defined objective function.
This is an intuitive approach 
which parallels similar concepts from cognitive psychology
and neuroscience.
We approximate the global minima 
of the objective function 
by solving a sequence of optimization sub-problems 
that make use of entropy regularization at different levels.
This is a deterministic annealing approach
\cite{mavridis2022online,rose1998deterministic} 
that (a) adjusts the number of prototypes/neurons (which defines the complexity of the model) 
as needed through an intuitive bifurcation phenomenon,
(b) offers robustness with respect to the initial conditions, and
(c) generalizes the proximity measures used to quantify the similarity between
two vectors in the data space beyond convex metrics.
In addition, the annealing nature of the algorithm contributes to (but does not guarantee)
avoiding poor local minima, 
requires minimal 
hyper-parameter tuning, and 
allows online control over the performance-complexity trade-off.
%

Although deterministic annealing approaches have been known for a while 
\cite{rose1998deterministic}, 
an online optimization method for such architectures is an important development,
similar to the introduction of a greedy online training algorithm for 
a network of restricted Boltzmann machines 
that gave rise to one of the first effective deep learning algorithms
\cite{hinton2006fast}.
We develop an online training rule based on 
a stochastic approximation algorithm
\cite{robbins1951stochastic,borkar2009stochastic}
and show that it is also gradient-free, provided that the proximity measure 
used belongs to the family of Bregman divergences:
information-theoretic dissimilarity measures
that play an important role in learning applications 
and include the widely used Euclidean distance and 
Kullback-Leibler divergence
\cite{banerjee2005clustering,villmann_onlineDLVQmath_2010}.
While stochastic approximation 
offers an online, adaptive, 
and computationally inexpensive optimization framework,
it is also strongly connected to dynamical systems.
This enables the study 
of the convergence of the learning algorithm through 
mathematical tools from dynamical systems and control 
\cite{borkar2009stochastic}.
We take advantage of this property to prove the consistency of the 
proposed learning algorithm as a density estimator (unsupervised learning),
and as a classification rule (supervised learning).
Moreover, we make use of the theory of two-timescale 
stochastic approximation to show that the proposed learning
algorithm can be used as an adaptive aggregation 
scheme in reinforcement learning settings with:
(a) a fast component that executes a temporal-difference learning algorithm, 
and (b) a slow component for adaptive aggregation of the state-action space.
Finally, we illustrate the properties and evaluate the performance 
of the proposed learning algorithm in multiple experiments.

In particular, we start by dedicating Section \ref{app:sa}
to reviewing the theory of stochastic approximation as an optimization approach for 
learning algorithms,
giving emphasis to its connection to dynamical systems.
This concise background is targeted towards broader audience and aims to motivate
the generalization of training updates in learning algorithms.
We follow with Section \ref{Sec:ODA}, where  
we introduce the Online Deterministic Annealing (ODA) algorithm for unsupervised and
supervised learning and
study its convergence and practical implementation.
In Section \ref{Sec:RL} we show how ODA can be integrated with 
common reinforcement learning approaches, and in particular as an adaptive 
state-action aggregation algorithm that allows Q-learning to be applied to 
Markov decision processes with infinite-dimensional state and input spaces.
Finally, Section \ref{Sec:Results} illustrates experimental results, and 
Section \ref{Sec:Conclusion} concludes the paper.

\section{Stochastic Approximation: Learning with Dynamical Systems}
\label{app:sa}

In this section we briefly review the theory of stochastic approximation 
which is going to form the base for 
the convergence analysis of the proposed learning schemes.
We give emphasis to its connection to dynamical systems, and how 
this property can be particularly useful to optimization and 
machine learning algorithms.

\subsection{Stochastic Approximation and Dynamical Systems}
\label{sSec:SA}

Stochastic approximation, first introduced in 
\cite{robbins1951stochastic},
was originally conceived as a tool for statistical computation, 
%
and, since then, has become a central tool in a number of different disciplines, 
often times unbeknownst to the users, researchers and practitioners.
%
%
Stochastic approximation offers an online, adaptive, 
and computationally inexpensive optimization framework, 
properties that make it an ideal optimization method
for machine learning algorithms. 
As a result, many of the most widely used learning algorithms 
partially or entirely consist of stochastic approximation algorithms;
from stochastic gradient descent used in the back-propagation algorithm 
to train artificial neural networks
\cite{rumelhart1986learning,bottou1998online}, 
to the Q-learning algorithm 
used in reinforcement learning applications 
\cite{watkins1992q,tsitsiklis1994asynchronous}.
%

In addition to its connection with optimization and learning 
algorithms, 
stochastic approximation is also strongly connected to dynamical systems. 
A fact that is often overlooked is that almost any recursive numerical algorithm 
can be described by a discrete time dynamical system. 
In this sense, results about the behavior, 
e.g. stability and convergence properties, 
of discrete time dynamical systems can be applied to 
iterative optimization and learning algorithms. 
%
%
This connection is remarkably direct in stochastic approximation 
which allows the study of its convergence through the analysis of 
an ordinary differential equation, 
as illustrated in the following theorem, proven in 
\cite{borkar2009stochastic}:

\begin{theorem}[\cite{borkar2009stochastic}, Ch.2]
\label{thm:borkar}
	Almost surely, the sequence $\cbra{x_n}\in S\subseteq\mathbb{R}^d$ 
	generated by the following stochastic approximation scheme:
	\begin{align}
		x_{n+1} = x_n + \alpha(n) \sbra{h(x_n) + M_{n+1}},\ n \geq 0	
	\label{eq:sa}	
	\end{align}
	with prescribed $x_0$, 
	\textit{converges} to a (possibly sample path dependent)
	compact, connected, internally chain transitive, invariant set
	of the o.d.e:
	\begin{align}
		\dot{x}(t) = h\pbra{x(t)}, ~ t \geq 0, 	
	\label{eq:sa_ode}	
	\end{align}
	where $x:\mathbb{R}_+\rightarrow\mathbb{R}_d$ and $x(0) = x_0$, 
	provided the following assumptions hold:
	\begin{itemize}
	\setlength\itemsep{0em}
	\item[(A1)] The map $h:\mathbb{R}^d \rightarrow \mathbb{R}^d$ is Lipschitz
		in $S$,	i.e., $\exists L$ with $0 < L < \infty$ such that
		$\norm{h(x)-h(y)} \leq L\norm{x-y}, ~ x,y \in S$,
	\item[(A2)] The stepsizes $\cbra{\alpha(n) \in \mathbb{R}_{++}, ~ n \geq 0}$
	satisfy
		$ \sum_n \alpha(n) = \infty$, and $\sum_n \alpha^2(n) < \infty$	,
	\item[(A3)] $\cbra{M_n}$ is a martingale difference sequence 
		with respect to the increasing family of $\sigma$-fields
		$ \mathcal{F}_n := \sigma \pbra{ x_m, M_m,~ m \leq n }$, ${n \geq 0}$,
		i.e., $\E{M_{n+1}|\mathcal{F}_n} = 0 ~ a.s.$, for all $n \geq 0$,
		and, furthermore, $\cbra{M_{n}}$ are square-integrable with 
		$ \E{\norm{M_{n+1}}^2|\mathcal{F}_n} \leq K \pbra{ 1 + \norm{x_n}^2 }, 
		~ a.s.$, where $n \geq 0 $ for some $K >0$,
	\item[(A4)] The iterates $\cbra{x_n}$ remain bounded a.s., i.e.,
		${ \sup_n \norm{x_n} < \infty}$ $ a.s.$
	\end{itemize}
\end{theorem}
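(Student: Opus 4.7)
The plan is to follow the classical ODE method of stochastic approximation. First I would introduce a continuous-time interpolation of the discrete iterates: define cumulative times $t_0 = 0$, $t_n = \sum_{m=0}^{n-1} \alpha(m)$ (which diverges by (A2)), and let $\bar{x}(t)$ be the piecewise linear interpolant with $\bar{x}(t_n) = x_n$. The goal is to compare $\bar{x}$, on each window $[t_n, t_n + T]$ for fixed $T > 0$, with the true flow of the o.d.e.\ \eqref{eq:sa_ode} started from $\bar{x}(t_n)$, and to show the two trajectories become uniformly close as $n \to \infty$.

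Next I would isolate and control the noise. Writing the increment as $\alpha(n)\, h(x_n) + \alpha(n) M_{n+1}$, the deterministic part is the Euler-like step for the o.d.e., and the issue is the cumulative martingale contribution $\zeta_n := \sum_{m=0}^{n-1} \alpha(m) M_{m+1}$. Using (A3) together with (A4), the quadratic variation of $\zeta_n$ is bounded by $K \sum_m \alpha^2(m)(1+\sup_n \norm{x_n}^2)$, which is a.s.\ finite by (A2); hence by the martingale convergence theorem $\cbra{\zeta_n}$ converges a.s. This is what allows the accumulated noise on any fixed time window $[t_n, t_n+T]$ to vanish as $n \to \infty$.

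With the noise handled, the deterministic part can be compared to the o.d.e.\ flow using (A1) and a Gronwall argument: the local truncation error of the Euler scheme is $O(\alpha(n))$ per step, and Lipschitzness of $h$ amplifies errors by at most $e^{LT}$ over the window, so the uniform distance between $\bar{x}$ and the o.d.e.\ trajectory on $[t_n, t_n+T]$ tends to $0$ a.s. In particular, the piecewise-linear interpolation becomes an asymptotic pseudotrajectory of the semiflow induced by $h$, in the sense of Bena\"im.

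Finally, to conclude, I would invoke the general property that the limit set of any bounded asymptotic pseudotrajectory of a semiflow is nonempty, compact, connected, invariant, and internally chain transitive; boundedness is supplied by (A4). The main obstacle, in my view, is the interplay between (A3) and (A4): one must ensure that the noise series genuinely converges a.s.\ despite $\norm{x_n}$ being only almost surely bounded (with a possibly sample-path-dependent bound), which forces the martingale estimates to be carried out on the event $\cbra{\sup_n \norm{x_n} \le R}$ for each $R$ and then combined. The Lipschitz/Gronwall step and the identification of the limit set as internally chain transitive are comparatively routine once this pseudotrajectory property is established.
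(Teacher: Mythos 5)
The paper offers no proof of this theorem---it is quoted directly from Borkar's monograph (Ch.~2)---and your sketch reproduces precisely the argument given in that cited source: the piecewise-linear interpolation on the timescale $t_n=\sum_{m<n}\alpha(m)$, almost-sure convergence of the martingale $\sum_m \alpha(m)M_{m+1}$ via its quadratic variation with the localization on $\{\sup_n \|x_n\|\le R\}$ that (A4) necessitates, a Gronwall comparison with the flow of (\ref{eq:sa_ode}) over finite windows, and the Bena\"im-type characterization of limit sets of bounded asymptotic pseudotrajectories as compact, connected, internally chain transitive invariant sets. Your proposal is correct and takes essentially the same route as the proof the paper points to.
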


Intuitively, the stochastic process (\ref{eq:sa}) can be seen
as a noisy discretization 
(also known as Euler scheme in numerical analysis literature)
of (\ref{eq:sa_ode}).
As an immediate result, the following corollary also holds:
\begin{corollary}[\cite{borkar2009stochastic}]
If the only internally chain transitive invariant sets for
(\ref{eq:sa_ode}) are isolated equilibrium points,
then, almost surely, $\cbra{x_n}$ converges to a, 
possibly sample dependent, equilibrium point of (\ref{eq:sa_ode}).  
\label{crl:sa_equillibria}
\end{corollary}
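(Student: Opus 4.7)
The plan is to obtain the corollary as a direct specialization of Theorem \ref{thm:borkar}. First, I would apply that theorem, which, under assumptions (A1)--(A4), yields that, almost surely, there exists a compact, connected, internally chain transitive, invariant set $A\subseteq S$ of the o.d.e.\ (\ref{eq:sa_ode}) such that the point-to-set distance $d(x_n,A)\to 0$. Next, I would invoke the hypothesis of the corollary: the only internally chain transitive invariant sets of (\ref{eq:sa_ode}) are isolated equilibrium points. Since the limit set $A$ produced by Theorem \ref{thm:borkar} is necessarily internally chain transitive and invariant, it must coincide with one such equilibrium, i.e., $A=\{x^*\}$ for some $x^*$ satisfying $h(x^*)=0$. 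Finally, I would observe that $d(x_n,\{x^*\})=\|x_n-x^*\|$, so convergence to the singleton in the set-distance sense is equivalent to $x_n\to x^*$ in the ordinary sense; the particular $x^*$ is determined by the limit set furnished by Theorem \ref{thm:borkar} and may therefore depend on the sample path.

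There is essentially no analytic obstacle, since all the nontrivial work has been encapsulated in Theorem \ref{thm:borkar}; the task here is a purely topological reduction. The only verification worth carrying out is consistency of the hypothesis, namely that isolated equilibrium points themselves qualify as compact, connected, internally chain transitive, invariant sets, so that the hypothesis is not vacuous. Compactness and connectedness are immediate for singletons; invariance follows from the fact that $h(x^*)=0$ makes $x(t)\equiv x^*$ a solution of (\ref{eq:sa_ode}) with $x(0)=x^*$; and internal chain transitivity of a single point is trivial, since the constant trajectory at $x^*$ serves as an $(\epsilon,T)$-chain from $x^*$ to itself for every $\epsilon,T>0$.

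The main subtlety worth flagging, although it is inherited directly from Theorem \ref{thm:borkar} rather than introduced by the corollary, is the sample-path dependence of the limit point $x^*$: different realizations of the martingale noise $\{M_n\}$ may steer the iterates toward different isolated equilibria, so the statement should be read almost surely pointwise in $\omega$, with $x^*(\omega)$ measurable but not deterministic in general. Aside from making this measurability observation explicit, no further argument is needed.
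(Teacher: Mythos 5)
Your argument is correct and is exactly the reduction the paper has in mind: the corollary is stated as an immediate consequence of Theorem \ref{thm:borkar} (the limit set guaranteed by that theorem is internally chain transitive and invariant, hence by hypothesis a singleton equilibrium, and convergence in point-to-set distance to a singleton is ordinary convergence). The paper offers no separate proof, citing \cite{borkar2009stochastic}, so your specialization, together with the sanity check that singleton equilibria are indeed compact, connected, invariant, and internally chain transitive, is the complete intended argument.
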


Given the conditions of Theorem \ref{thm:borkar} 
and using standard Lyapunov arguments, the following corollary, 
regarding distributed, asynchronous implementation of the algorithm, also holds:
\begin{corollary}[\cite{borkar2009stochastic}, Ch. 7]	
Suppose there exists a continuously differentiable function $J$, 
such that $h(x)=-\nabla J(x)$ (or $h(x)=F(x)-x$). 
Define $Y_n\subseteq \cbra{1,\ldots,d}$ to be 
the subset of components of $x_n$ that are updated at time $n$, and 
$v(i,n):= \sum_{m=0}^n \mathds{1}_{\sbra{i\in Y_m}}$ to be 
the number of times the $i$-th component $x_n^{(i)}$ has been updated 
up until time $n$.
Then, almost surely, the sequence $\cbra{x_n}$ generated by
\begin{align}
	x_{n+1}^{(i)} = x_n^{(i)} + \alpha(v(i,n)) \mathds{1}_{[i\in Y_n]} 
		\sbra{h^{(i)}(x_n) + M_{n+1}^{(i)}}
\label{eq:sa_as}	
\end{align}
where $i\in\cbra{1,\ldots,d}$, and $n \geq 0$, converge to 
the invariant set $H:=\cbra{x: \nabla J(x)=0}$ 
(or $H:=\cbra{x: F(x)=x}$), 
provided that each component $(i)$ is updated infinitely often, i.e.
\begin{align*}
\lim_{n\rightarrow\infty} \inf \frac{v(i,n)}{n} > 0.
\end{align*}	 	   
\label{crl:sa_as}
\end{corollary}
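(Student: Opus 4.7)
The plan is to reduce the asynchronous recursion (\ref{eq:sa_as}) to the synchronous scheme of Theorem~\ref{thm:borkar}, and then use a Lyapunov function to identify the resulting limit set as $H$. Two features of the problem enable this reduction: each coordinate uses its own local stepsize $\alpha(v(i,n))$ driven by the counter of its own updates, and the condition $\liminf_n v(i,n)/n > 0$ prevents any coordinate from being updated at an asymptotically vanishing rate relative to the others.

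First, I would interpret (\ref{eq:sa_as}) as a coordinate-wise time-rescaled version of (\ref{eq:sa}). Defining local clocks $t_n^{(i)} := \sum_{k=0}^{n-1}\alpha(v(i,k))\mathds{1}_{[i\in Y_k]}$ and constructing piecewise-constant continuous-time interpolations of $\{x_n^{(i)}\}$ along these clocks, the uniform lim-inf condition ensures that all local clocks diverge at comparable rates. After a global time change, the interpolated trajectories can be shown to converge, uniformly on compact intervals, to a solution of $\dot{x}(t) = \Lambda\, h(x(t))$, where $\Lambda$ is a diagonal matrix with strictly positive entries reflecting the asymptotic relative update frequencies. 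Assumptions (A1)--(A4) transfer directly from the synchronous case: (A1) guarantees Lipschitz continuity of the right-hand side of the limiting ODE; (A2) together with (A3) controls the martingale noise via Doob's inequality and standard vanishing-perturbation estimates; and (A4) supplies the boundedness needed to restrict attention to a compact domain. This is essentially the asynchronous ODE method in \cite{borkar2009stochastic}, Chapter~7.

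Second, I would invoke Corollary~\ref{crl:sa_equillibria} on the limiting ODE $\dot{x} = \Lambda\,h(x)$. In the gradient case $h = -\nabla J$, the function $J$ itself serves as a strict Lyapunov function, since $\dot{J}(x(t)) = -\nabla J(x)^{\T} \Lambda \nabla J(x) \leq 0$ with equality if and only if $\nabla J(x) = 0$, thanks to the strict positivity of the diagonal of $\Lambda$. Hence every internally chain transitive invariant set of the limiting ODE lies in $H = \{x:\nabla J(x)=0\}$, and $\{x_n\}$ converges almost surely to $H$. The contraction case $h(x) = F(x) - x$ is handled analogously, using $\|x - F(x)\|^2$, or an appropriate weighted quadratic Lyapunov function when $F$ is a contraction in some norm.

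The main obstacle is the asynchronous time-rescaling step: one must show that replacing the global stepsize $\alpha(n)$ by the coordinate-dependent $\alpha(v(i,n))$ does not disturb the ODE tracking property. Without $\liminf_n v(i,n)/n > 0$, the slowest coordinates could lag arbitrarily behind and the interpolated trajectory might fail to approximate any well-defined flow. With this bound in place, the ratios $\alpha(v(i,n))/\alpha(n)$ stay bounded above and away from zero, allowing a Hirsch/Gronwall-style perturbation estimate to close. Carrying out this uniform-on-compact-intervals estimate rigorously, rather than merely invoking it, is the technical heart of the argument.
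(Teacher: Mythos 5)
The paper offers no proof of this corollary---it is quoted directly from \cite{borkar2009stochastic}, Ch.~7, with only the remark that it follows from the conditions of Theorem~\ref{thm:borkar} ``using standard Lyapunov arguments''---and your sketch correctly reconstructs exactly that argument: local clocks, reduction to a scaled limiting ODE, and the Lyapunov function $J$ with $\dot J = -\nabla J^{\T}\Lambda\nabla J \le 0$. The only imprecision worth noting is that the relative update frequencies need not converge, so the limiting dynamics are in general the non-autonomous equation $\dot x(t)=\Lambda(t)\,h(x(t))$ with $\Lambda(t)$ a diagonal matrix whose entries are merely bounded away from zero (which is all the Lyapunov step needs), rather than a fixed $\Lambda$; and in the fixed-point case $h=F-x$ the quadratic $\|x-F(x)\|^2$ does not decrease without further structure on $F$, so one must assume, e.g., a sup-norm contraction as in the Q-learning setting the paper has in mind.
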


Corollaries \ref{crl:sa_equillibria} and \ref{crl:sa_as}, reveal the 
connection of the stochastic approximation algorithms with 
iterative approximation and optimization algorithms, including 
two notable special cases: 
stochastic gradient descent, and
the Q-learning algorithm
%
These special cases of stochastic approximation,
are discussed in more detail in what follows.

\subsection{Stochastic Gradient Descent}
\label{sSec:SGD}

%
Stochastic gradient descent is an iterative stochastic optimization
method that tries to solve the problem of  
minimizing the cost:
\begin{equation}
\min_\theta \E{J(X,\theta)}
\end{equation}
where $X:\Omega\rightarrow H$ is a random variable defined in 
the probability space $(\Omega,\mathcal{F},\mathbb{P})$, and 
$H$ is a Hilbert space.
%
%
%
The update
\begin{equation}
\theta_{n+1} = \theta_n - \alpha_n \nabla_\theta J(x_n,\theta_n)
\label{eq:sgd}
\end{equation}
is used
to bypass the estimation of $\Eh{J(x,\theta_n)}= \frac{1}{n}\sum_{i=1}^n J(x_i,\theta_n)$ 
which can be expensive or infeasible. 
%
%
This is a special 
case of a stochastic approximation update.
Observe that, 
under the condition 
$\nabla_\theta \E{J(x,\theta_n)} = \E{\nabla_\theta J(x,\theta_n)}$, (\ref{eq:sgd}) can be written as:
\begin{equation}
\theta_{n+1} = \theta_n + \alpha_n \sbra{ 
-\nabla_\theta \E{J} + (\E{\nabla_\theta J} - \nabla_\theta J)}
\end{equation}
where $h(\theta)=-\nabla_\theta \E{J(X,\theta)}$ 
is a Lipschitz continuous function, 
and $M_n=\E{\nabla_\theta J(X,\theta)} - \nabla_\theta J(x_n,\theta)$ is a martingale difference
sequence, since the data samples $x_n$ are assumed independent realizations
of the random variable $X$.
Therefore by Theorem \ref{thm:borkar} 
and Corollary \ref{crl:sa_equillibria},
as long as $ \sum_n \alpha(n) = \infty$, and $\sum_n \alpha^2(n) < \infty$,
and $\theta_n$ remain bounded a.s., stochastic gradient descent will 
converge to a possibly path dependent equilibrium of 
$\dot \theta = -\nabla_\theta \E{J(X,\theta)}$, i.e., 
in a minimizer of $\E{J(X,\theta)}$.

\subsection{Q-learning}
\label{sSec:Qlearning}

As a second example, the Q-learning algorithm, widely used in reinforcement
learning, is again a special case of a stochastic approximation algorithm
\cite{borkar2000ode}.
Consider a discrete-time Markov Decision Process (MDP) $(\mathcal{X}, \mathcal{U}, \mathcal{P}, C)$
with:
%
\begin{itemize}
\item $\mathcal{X}$ being the state space,
\item $\mathcal{U}$ being the action (control) space,
\item $\mathcal{P}:(x,u,x') \mapsto \mathbb{P}\sbra{x'|x,u}$ being
	the transition probabilities associated with  
	a stochastic state transition function 	 
	${f:(x,u)\mapsto x'}$, and
\item $C:\mathcal{X}\times\mathcal{U}\rightarrow\mathbb{R}_+$,
	being the immediate cost function, assumed deterministic.
\end{itemize}
%
Reinforcement Learning (RL) examines the problem of learning 
a control policy $u:=(u_0, u_1, \ldots)$ 
that solves the discounted infinite-horizon optimal control problem
\begin{align*}
\min_u \E{ \sum_{l=0}^\infty \gamma^l C(x_l,u_l)}
\end{align*}
where $\gamma\in(0,1]$.
Define the value function $V^u$ of a policy $u$ as 
\begin{align*}
V^u(x_k) :&= \E{ \sum_{l=k}^\infty \gamma^{l-k} C(x_l,u_l)} \\
	&= C(x_k,u_k) + \gamma \E{V^u(x_{k+1})\mid x_k} \\
	&= Q^u(x_k,u_k)
\end{align*}
where $Q^u$ represents the quality function of a policy $u$, i.e.
the expected return for taking action $u_k$ at time $k$ and state 
$x_k$, and thereafter following policy $u$. 
As a result of Bellman's principle, we get the (discrete-time) 
Hamilton-Jacobi-Bellman (HJB) equation 
\begin{equation}
\begin{aligned}
V^*(x_k):&= 
	\min_{u}~ \E{ \sum_{l=k}^\infty \gamma^{l-k} C(x_l,u_l) } \\
	&\overset{(HJB)}{=}	\min_{u} \cbra{~ C(x_k,u_k) + \gamma \E{V^*(x_{k+1})\mid x_k} ~} \\
	&= \min_{u_k} Q^*(x_k,u_k)
\end{aligned}
\label{eq:HJB}
\end{equation}
where $V^*:=V^{u^*}$ and  $Q^*:=Q^{u^*}$ represent the optimal value and
$Q$ functions, respectively.
Reinforcement learning algorithms consist mainly of
temporal-difference learning algorithms 
that try to approximate a solution to (\ref{eq:HJB})
using iterative optimization methods.
The optimization is performed over a finite set of parameters 
which are used to describe the value (or Q) function.
These parameters typically correspond to a parametric model 
(e.g. a neural network) used for function approximation, or
to the different values of the vector $V(\mathcal{X})$ (or $Q(\mathcal{X},\mathcal{U})$),
in which case $\mathcal{X}$ and $\mathcal{U}$ are assumed finite 
either by definition or as a result of discretization.
Assuming that the state and action spaces
$\mathcal{X}$ and $\mathcal{U}$ are finite, 
a widely used approach is  
the $Q$-learning algorithm:
\begin{align*}
Q_{j+1}(x,u') &= Q_{j}(x,u') + \alpha_j [ C(x,u')\\
 &~~~~~~~~~~~~~ + \gamma \min_{u} Q_{j}(x',u) - Q_{j}(x,u') ]
\end{align*}
which is a stochastic approximation algorithm \cite{borkar2000ode}:
\begin{equation}
\begin{aligned}
& Q_{j+1}(x,u') = Q_{j}(x,u') + \alpha_j [ \\
&~~ \pbra{ C(x,u') + \gamma  \E{\min_{u} Q_{j}(x,u)}  - Q_{j}(x,u') } \\
&~~ + \gamma \pbra{ \min_{u} Q_{j}(x',u) - \E{\min_{u} Q_{j}(x,u)} } ]
\end{aligned}
\end{equation}
with $h(Q(x,u'))=C(x,u') + \gamma  \E{\min_{u} Q_{j}(x,u)}  - Q_{j}(x,u')$,
and $M_{j}=\gamma \pbra{ \min_{u} Q_{j}(x',u) - \E{\min_{u} Q_{j}(x,u)} }$
is a martingale difference sequence.
As a result, under the conditions of Theorem \ref{thm:borkar},
the Q-learning algorithm converges to the global
equilibrium of $\dot Q = F(Q)-Q$, with 
$F(Q(x,u')) = C(x,u') + \gamma  \E{\min_{u} Q_{j}(x,u)}$, i.e., 
to the stationary point $Q(x,u) = C(x,u) + \gamma  \E{\min_{u} Q_{j}(x,u)}$, 
which solves the Hamilton-Jacobi-Bellman equation.

\subsection{Dynamics and Control for Learning}

It follows from the above that 
stochastic approximation algorithms define a family of iterative 
approximation and optimization algorithms that can be used, 
among others, for machine learning applications. 
%
Their strong connection to dynamical systems (see Theorem \ref{thm:borkar}),
can give rise to the study of 
learning algorithms and representations through 
systems-theoretic mathematics, connecting machine learning with
stochastic optimization, 
adaptive control and dynamical systems, 
which   
%
can lead to new developments in the field of machine learning.
As a first example, notice that (\ref{eq:sa}) defines an iterative 
algorithm that can be used for stochastic optimization and does not 
necessarily depend on the gradient of a cost function.
As will be
shown in Section \ref{Sec:ODA},
this can lead to gradient-free learning algorithms that can alleviate 
common problems such as that of the vanishing gradients.
%

In addition, the developed mathematical theory of dynamical systems 
can be utilized to construct and study learning algorithms 
that run at the same time but at different timescales.
In particular, the theory of the O.D.E. method 
for stochastic approximation in two timescales as detailed in
\cite{borkar2009stochastic} is summarized in the following theorem:

\begin{theorem}[Ch. 6 of \cite{borkar1997stochastic}]
Consider the sequences $\cbra{x_n}\in S\subseteq\mathbb{R}^d$ and 
$\cbra{y_n}\in \Sigma\subseteq\mathbb{R}^k$,
generated by the iterative stochastic approximation schemes:
\begin{align}
x_{n+1} = x_n + \alpha(n) \sbra{f(x_n,y_n) + M_{n+1}^{(x)}} 
\label{eq:sa_timescales_x}	\\
y_{n+1} = y_n + \beta(n) \sbra{g(x_n,y_n) + M_{n+1}^{(y)}}
\label{eq:sa_timescales_y}
\end{align}

for $n \geq 0$ and $M_n^{(x)}$, $M_n^{(y)}$ martingale difference sequences, 
and assume that
$\sum_n \alpha(n) = \sum_n \beta(n) = \infty$, 
$\sum_n ( \alpha^2(n)+\beta^2(n) ) <\infty$, and 
$\nicefrac{\beta(n)}{\alpha(n)}\rightarrow 0$,
%
%
with the last condition implying that the iterations for $\cbra{y_n}$
run on a slower timescale than those for $\cbra{x_n}$. 
If the equation 
\begin{equation*}
\dot x (t) = f(x(t),y),\ x(0)=x_0
\end{equation*}
has an asymptotically stable equilibrium $\lambda(y)$ 
for fixed $y$ and some Lipschitz mapping $\lambda$, and the equation 
\begin{equation*}
\dot y (t) = g(\lambda(y(t)),y(t)),\ y(0)=y_0
\end{equation*}
has an asymptotically stable equilibrium $y^*$, 
then, almost surely, $(x_n,y_n)$ converges to $(\lambda(y^*),y^*)$.
\label{thm:borkar_timescales}
\end{theorem}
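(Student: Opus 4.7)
The plan is to apply Theorem~\ref{thm:borkar} twice, once on each timescale, exploiting the scale separation to decouple the analysis. The key heuristic is that from the viewpoint of the fast $x$-iteration, the slow $y_n$ appears quasi-static, while from the viewpoint of the slow $y$-iteration, $x_n$ has already equilibrated to $\lambda(y_n)$.

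\textbf{Step 1 (tracking on the fast timescale).} First, I would rewrite the $y$-iteration on the $\alpha$-timescale as
\begin{equation*}
y_{n+1} = y_n + \alpha(n) \sbra{ \tfrac{\beta(n)}{\alpha(n)} g(x_n,y_n) + \tfrac{\beta(n)}{\alpha(n)} M_{n+1}^{(y)} }.
\end{equation*}
Because $\beta(n)/\alpha(n)\to 0$ and the iterates are bounded (so $g(x_n,y_n)$ is bounded), both drift and noise in this rewriting vanish asymptotically. Viewing the joint iterate $(x_n,y_n)$ on the $\alpha$-timescale and invoking Theorem~\ref{thm:borkar}, the limiting o.d.e.\ is the degenerate system $\dot x = f(x,y)$, $\dot y = 0$. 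Its internally chain transitive invariant sets lie inside $\cbra{(\lambda(y),y): y\in\Sigma}$, since for each frozen $y$ the asymptotic stability assumption makes $\cbra{\lambda(y)}$ the only such set of $\dot x = f(x,y)$. This yields the tracking property $\norm{x_n - \lambda(y_n)} \to 0$ almost surely.

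\textbf{Step 2 (slow iteration).} Next, I would analyze the $y$-iteration on its natural $\beta$-timescale by writing
\begin{equation*}
y_{n+1} = y_n + \beta(n) \sbra{ g(\lambda(y_n),y_n) + \epsilon_n + M_{n+1}^{(y)} },
\end{equation*}
where $\epsilon_n := g(x_n,y_n) - g(\lambda(y_n),y_n)$. Lipschitz continuity of $g$ together with the tracking result from Step~1 gives $\epsilon_n\to 0$ almost surely, so $\epsilon_n$ is a vanishing perturbation that does not affect the asymptotic behavior of the stochastic approximation. Applying Theorem~\ref{thm:borkar} to this iteration, $\cbra{y_n}$ converges to the internally chain transitive invariant sets of $\dot y = g(\lambda(y),y)$; the asymptotic stability of $y^*$ then forces $y_n\to y^*$ a.s. Combining with tracking yields $x_n\to\lambda(y^*)$ a.s.

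\textbf{Main obstacle.} The delicate step is the rigorous justification of Step~1: making precise the claim that the internally chain transitive invariant sets of the degenerate o.d.e.\ are exactly the graph of $\lambda$, and that the perturbed iteration really does satisfy the hypotheses of Theorem~\ref{thm:borkar} after absorbing the $\beta(n)/\alpha(n)$ factor into the drift and noise. This requires (i)~Lipschitz continuity of $\lambda$, which is assumed, (ii)~uniform control of $g$ on the compact region containing the bounded iterates, and (iii)~verification that the rescaled sequence $(\beta(n)/\alpha(n))M_{n+1}^{(y)}$ still satisfies the martingale-difference conditional-variance bound (A3). Step~2 is comparatively routine, since a vanishing perturbation of an otherwise well-behaved stochastic approximation inherits the convergence of the unperturbed scheme via a standard Gronwall-type argument.
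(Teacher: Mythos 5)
The paper does not prove this theorem; it is quoted verbatim as a known result from Borkar (Ch.~6), so there is no in-paper proof to compare against. Your two-step sketch (fast-timescale tracking $\norm{x_n-\lambda(y_n)}\to 0$ via the degenerate o.d.e.\ $\dot x=f(x,y),\ \dot y=0$, followed by treating the slow iteration as a vanishing perturbation of $\dot y=g(\lambda(y),y)$) is precisely the canonical argument in that reference, with the one caveat that pinning the chain transitive sets of the frozen-$y$ dynamics to the singleton $\cbra{\lambda(y)}$ requires the \emph{global} asymptotic stability that Borkar assumes, slightly stronger than the literal wording above.
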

This result allows for two learning algorithms,
that may depend on each other,
to run online at the same time, but at different timescales.
%
As will be shown in Section \ref{Sec:RL}, 
a two-timescale stochastic approximation algorithm can be used 
for reinforcement learning with: 
(a) a fast component that executes a Q-learning algorithm, 
and (b) a slow component, that adaptively partitions
the state-action space according to an appropriately defined 
dissimilarity measure.

\section{Online Deterministic Annealing for Unsupervised and Supervised Learning}
\label{Sec:ODA}

To formulate the mathematics of a prototype-based 
learning model that progressively grows in size, 
it is convenient to start our analysis with the case of unsupervised learning, i.e., 
clustering and density estimation, 
and then show how these results generalize in the supervised case, i.e., 
in classification and regression,
as well.
In the context of unsupervised learning, 
the observations (data) 
are represented by a random variable $X: \Omega \rightarrow S$ defined in a 
probability space $\pbra{\Omega, \mathcal{F}, \mathbb{P}}$, 
where $S\subseteq \mathbb{R}^d$ is the observation space (data space).
The goal of prototype-based learning is to define a 
similarity measure $d:S\rightarrow ri(S)$ 
(where $ri(S)$ represents the relative interior of $S$) and a
set of $K$ prototypes $\mu := \cbra{\mu_i}_{i = 1}^K$, $\mu_i \in ri(S)$, 
on the data space
such that an 
average distortion measure is minimized, i.e., 
\begin{equation}
    \min_{\mu} ~ J(\mu) := \E{\min_i d(X,\mu_i)}
    \label{eq:VQ}
\end{equation}
Here the similarity measure as well as the number of prototypes $K$
are predefined designer parameters. 
This process is equivalent to finding the most suitable model out of a set of $K$
local constant models, and results in a piecewise-constant approximation of 
the data space $S$.
This representation has been used for clustering in vector quantization 
applications \cite{mavridis2020convergence, Kohonen1995}, 
and, in the limit $K\rightarrow\infty$, can be used for density estimation.

To construct a learning algorithm that progressively increases the number 
of prototypes $K$ as needed 
according to different ``levels of detail'' (to be defined shortly)
we will define a probability space over an infinite number of candidate models, 
and constraint their distribution using the maximum-entropy principle 
at different levels.
As we will show, solving a sequence of optimization problems 
parameterized by a single parameter $T$
will result in a series of model distributions 
with a finite number of $K(T)$ values with non-zero probability, i.e., 
this process results in a finite number of ``effective codevectors'' $K(T)$
that depends on a ``temperature parameter'' $T$.

First we need to
adopt a probabilistic approach for (\ref{eq:VQ}), in which 
a quantizer $Q:S \rightarrow ri(S)$ 
is defined as a 
discrete random variable with domain $\mu := \cbra{\mu_i}_{i = 1}^K$,
such that (\ref{eq:VQ}) becomes
\begin{equation}
\begin{aligned}
\min_{\mu} ~ D(\mu) &:= \E{d\pbra{X,Q}} 
\\& 
=\E{\E{d(X,Q)|X}} 
\\&
=\int p(x) \sum_i p(\mu_i|x) d_\phi(x,\mu_i) ~dx
\end{aligned}
\label{eq:softVQ}
\end{equation}	
This is now a problem of finding the locations $\cbra{\mu_i}$ and the 
association probabilities
$\cbra{p(\mu_i|x)}:=\cbra{p(Q=\mu_i|X=x)}$.
Notice that this is a more general problem than that of (\ref{eq:VQ}),
where it is subtly assumed that 
$p(\mu_i|x)=\mathds{1}_{\sbra{x \in S_i}}$, where 
$S_i = \cbra{x \in S: i = \argmin\limits_{j = 
		1,\ldots,K} ~ d(x,\mu_{j})},\ i =1,\ldots,K$,
and $\cbra{S_i}$ defines a Voronoi partition.

Now, we make the assumption that 
we have access to an infinite number of possible models, i.e., 
that the quantizer $Q$ is a discrete random variable
over a countably infinite set $\mu := \cbra{\mu_i}$. 
Instead of choosing $K$ a priori, we will constraint 
the model distribution at different levels by maximizing the entropy:

\begin{equation}
\begin{aligned}
H(\mu) &:= \E{-\log P(X,Q)} 
\\&
=H(X) + H(Q|X)
\\&
=H(X) - \int p(x) \sum_i p(\mu_i|x) \log p(\mu_i|x) ~dx
\end{aligned}    
\end{equation}
This is essentially a realization of the 
Jaynes's maximum entropy principle \cite{jaynes1957information}.
%
We formulate the resulting multi-objective optimization as the minimization of the Lagrangian
\begin{equation}
\min_\mu F(\mu) := D(\mu) - T H(\mu)
\label{eq:F}
\end{equation}
where $T\in[0,\infty)$ acts as a Lagrange multiplier,
and, as we will show, can be seen as a temperature coefficient 
in an annealing process \cite{mavridis2022online,rose1998deterministic}.

\begin{remark}
Alternatively, (\ref{eq:F}) can be formulated as
\begin{equation}
\min_\mu F(\mu) := (1-\lambda) D(\mu) - \lambda H(\mu)
\label{eq:Flambda}
\end{equation}
where $\lambda\in\sbra{0,1}$, with
\begin{equation}
 T:=\frac{1-\lambda}{\lambda}  ,\quad \lambda\in\sbra{0,1}
 \label{eq:Tlambda}
\end{equation}
representing the corresponding temperature coefficient.
This is a mathematically equivalent formulation that, as will be discussed, 
can yield major benefits in the algorithmic implementation.
\end{remark}

Equation (\ref{eq:F}) represents the scalarization method for trade-off 
analysis between two performance metrics, one related to performance, and 
one to generalization.
The entropy $H$, 
acts as a regularization term, and is given progressively less weight 
as $T$ decreases.
For large values of $T\rightarrow \infty$ we maximize the entropy.
As we will show, this results 
in a unique effective codevector that represents the entire data space.
As $T$ is lowered, we essentially transition 
from one solution of the multi-objective optimization 
(a Pareto point when the objectives are convex) to another in a naturally occurring direction
that resembles an annealing process.

In this sense, the value of $T$ defines the ``level of detail'' of the dataset 
that is allowed to be seen by the maximum-entropy constraint. 
As we will show, when certain critical values of $T$ are reached, 
a bifurcation phenomenon occurs, according to which, the the number of non-zero values 
$K(T)$ of the 
model distribution increases, indicating that  
the solution to the optimization problem of minimizing $F(T)$ requires more ``effective codevectors'' $K(T)$.

\begin{remark}
We note that this concept is similar to
convex relaxation of hierarchical clustering, which
results in a family of objective functions with a natural geometric interpretation 
\cite{hocking2011clusterpath}. However, as we will show, the proposed approach does not 
make any relaxation assumptions, uses entropy as a naturally occurring regularization term, 
and allows for the development of a gradient-free training rule based on 
stochastic approximation.
This will result in a learning algorithm that can be integrated with reinforcement learning 
approaches.
\end{remark}

\subsection{Solving the Optimization Problem}

As in the case of standard vector quantization algorithms, 
we will minimize $F$ 
by successively minimizing it first respect to the 
association probabilities $\cbra{p(\mu_i|x)}$, 
and then with respect to the codevector locations $\mu$.

The following lemma provides the solution of 
minimizing $F$ with respect to the association probabilities $p(\mu_i|x)$: 
\begin{lemma}
The solution of the optimization problem
\begin{equation}
\begin{aligned}
F^*(\mu) &:= \min_{\cbra{p(\mu_i|x)}} F(\mu)
\\
\text{s.t.} & \sum_i p(\mu_i|x) = 1 
\end{aligned}
\label{eq:Fstar}
\end{equation}
is given by the Gibbs distributions
\begin{equation}
p^*(\mu_i|x) = \frac{e^{-\frac{d(x,\mu_i)}{T}}}
			{\sum_j e^{-\frac{d(x,\mu_j)}{T}}} ,~ \forall x\in S
\label{eq:gibbs}
\end{equation}
\end{lemma}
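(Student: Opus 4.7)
The plan is to observe that the objective $F(\mu) = D(\mu) - T H(\mu)$, when written out using the expressions for $D$ and $H$, becomes an integral over $x$ whose integrand depends only on the values $\{p(\mu_i\mid x)\}_i$ at that particular $x$. The term $H(X)$ is independent of the optimization variables, so it may be dropped, and the pointwise constraint $\sum_i p(\mu_i\mid x) = 1$ couples only the variables at the same $x$. Hence the infinite-dimensional problem decouples into a family of finite-dimensional convex programs indexed by $x$, and it suffices to minimize, for each fixed $x$,
\begin{equation*}
\ell_x(p) := \sum_i p_i\, d(x,\mu_i) + T \sum_i p_i \log p_i,\qquad \text{s.t.}\ \sum_i p_i = 1,\ p_i \geq 0,
\end{equation*}
where $p_i$ stands for $p(\mu_i\mid x)$.

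Next, I would apply the method of Lagrange multipliers to this pointwise problem. Introducing a multiplier $\lambda(x)$ for the equality constraint and differentiating the Lagrangian with respect to $p_i$ yields the stationarity condition
\begin{equation*}
d(x,\mu_i) + T\bigl(\log p_i + 1\bigr) + \lambda(x) = 0,
\end{equation*}
which rearranges to $p_i = \exp\!\bigl(-d(x,\mu_i)/T\bigr)\cdot \exp\!\bigl(-1 - \lambda(x)/T\bigr)$. The exponential form automatically guarantees the nonnegativity constraint, so the inequality constraints are inactive and need not be carried along. Enforcing $\sum_i p_i = 1$ determines the normalizing factor uniquely and produces exactly the Gibbs expression stated in the lemma.

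Finally, I would verify that this stationary point is indeed the global minimizer rather than a saddle or maximizer. The key observation is that $\ell_x$ is strictly convex in $p$: the linear term $\sum_i p_i d(x,\mu_i)$ is affine, and $T \sum_i p_i \log p_i$ is strictly convex on the probability simplex for $T > 0$ (its Hessian is diagonal with entries $T/p_i > 0$). Strict convexity on a convex feasible set guarantees that the unique KKT point found above is the unique global minimizer. I would briefly note the two boundary regimes for completeness: as $T \downarrow 0$ the Gibbs distribution collapses to the hard-assignment Voronoi rule used in classical vector quantization, and as $T \to \infty$ it tends to the uniform distribution, consistent with the annealing intuition described before the lemma.

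I do not expect any real obstacle here; the main thing to get right is the separation argument (justifying that pointwise minimization of the integrand gives the global minimizer of the integral), which follows because the constraint is pointwise and the integrand is bounded below uniformly in $x$ on the simplex, so measurability of the resulting $p^*(\mu_i\mid x)$ in $x$ is immediate from the continuity of the Gibbs map in its arguments.
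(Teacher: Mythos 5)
Your proposal follows essentially the same route as the paper's proof: form the Lagrangian for the pointwise constraint $\sum_i p(\mu_i|x)=1$, set the derivative with respect to $p(\mu_i|x)$ to zero to get $d(x,\mu_i)+T(1+\log p(\mu_i|x))+\nu=0$, and normalize to obtain the Gibbs form. Your additional steps---the explicit decoupling of the integral into pointwise convex programs and the strict-convexity check that the stationary point is the global minimizer---are correct and in fact supply rigor that the paper's first-order-only argument leaves implicit.
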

\begin{proof}
We form the Lagrangian:
\begin{equation}
\begin{aligned}
    \mathcal L_f(\cbra{p(\mu_i|x)},\nu)
    &:= D(\mu) -  T H(\mu) + \nu \pbra{\sum_i p(\mu_i|x) - 1} 
\end{aligned}    
\end{equation}
Taking $\frac{\partial \mathcal L}{\partial p(\mu|x) } = 0 $ yields:
\begin{equation}
\begin{aligned}
    & d(x,\mu_i) + T (1+\log p(\mu_i|x)) + \nu = 0 \\
    & \implies \log p(\mu_i|x) = - \frac{1}{T} d(x,\mu_i)
            - \pbra{ 1 + \frac \nu T } \\
    & \implies p(\mu_i|x) =  \frac{e^{- \frac{d(x,\mu_i)}{T}}}{e^{1 + \frac \nu T}}
\end{aligned}
\end{equation}
Finally, from the condition $\sum_i p(\mu_i|x) = 1$, it follows that 
\begin{equation}
    e^{1 + \frac \nu \lambda} = \sum_i e^{- \frac{d(x,\mu_i)}{T} }
\end{equation}
which completes the proof.
\end{proof}

In order to minimize $F^*(\mu)$ with respect to the codevector locations $\mu$ 
we set the gradients to zero 
\begin{equation}
\begin{aligned}
&\frac d {d\mu} F^*(\mu) = 0 
\implies 
\frac d {d\mu} \pbra{ D(\mu) - T H(\mu) } = 0 
\\&
\implies
\int p(x) \sum_i \frac d {d\mu} \pbra{p^*(\mu_i|x) d_\phi(x,\mu_i)} 
    \\&\quad\quad\quad + T \frac d {d\mu} \pbra{p^*(\mu_i|x) \log p^*(\mu_i|x)} ~dx = 0
\\&
\implies
\sum_i \int p(x) p^*(\mu_i|x) \frac d {d\mu_i} d(x,\mu_i) ~dx = 0
\end{aligned}
\label{eq:M}
\end{equation}
where we have used (\ref{eq:gibbs}), direct differentiation, and the fact that 
$\sum_i \frac d {d\mu} p^*(\mu_i|x) = \frac d {d\mu} \sum_i p^*(\mu_i|x) = 0$.
In the next section, we show that 
(\ref{eq:M}) has a
closed form solution if the dissimilarity measure $d$ 
belongs to the family of Bregman divergences.

\subsection{Bregman Divergences as Dissimilarity Measures}

Prototype-based algorithms 
rely on measuring the proximity between different vector representations.
In most cases the Euclidean distance or another convex metric is used,
but this can be generalized to alternative dissimilarity measures inspired by 
information theory and statistical analysis, such as the Bregman 
divergences \cite{banerjee2005clustering}:
%
%
\begin{definition}[Bregman Divergence]
	Let $ \phi: H \rightarrow \mathbb{R}$, 
	be a strictly convex function defined on 
	a vector space $H$ such that $\phi$  
	is twice F-differentiable on $H$. 
	The Bregman divergence 
	$d_{\phi}:H \times H \rightarrow \left[0,\infty\right)$
	is defined as:
	\begin{align*}
		d_{\phi} \pbra{x, \mu} = \phi \pbra{x} - \phi \pbra{\mu} 
							- \pder{\phi}{\mu} \pbra{\mu} \pbra{x-\mu},
	\end{align*}
	where $x,\mu\in H$, and the continuous linear map 
	$\pder{\phi}{\mu} \pbra{\mu}: H \rightarrow \mathbb{R}$ 
	is the Fr\'echet derivative of $\phi$ at $\mu$.
	\label{def:BregmanD}
\end{definition}
In this work, we will concentrate on nonempty, compact convex sets 
$S\subseteq \mathbb{R}^d$ 
so that the derivative of $d_\phi$ with respect to the second argument 
can be written as
{\small
\begin{align}
\pder{d_{\phi}}{\mu}(x,\mu) 
&= \pder{\phi(x)}{\mu} - \pder{\phi(\mu)}{\mu} 
- \pder{^2 \phi(\mu)}{\mu^2}(x-\mu) + \pder{\phi(\mu)}{\mu} \\
&= - \pder{^2 \phi(\mu)}{\mu^2}(x-\mu) 
= - \abra{\nabla^2 \phi(\mu),(x-\mu)}	
\label{eq:dd_phi}
\end{align}
}
where $x,\mu\in S$, $\pder{}{\mu}$ represents differentiation
with respect to the second argument of $d_{\phi}$, and 
$\nabla^2 \phi(\mu)$ represents the Hessian matrix of $\phi$ at $\mu$.
%

%
%


\begin{example}
As a first example, $\phi(x) = \abra{x,x},\ x\in\mathbb{R}^d$,
yields the squared Euclidean distance 
$d_\phi(x,\mu) = \|x-\mu\|^2$. 
\end{example}
\begin{example}
A second interesting Bregman divergence that shows the connection 
to information theory, is the generalized I-divergence 
which results from 
$\phi(x) = \abra{x,\log x},\ x\in\mathbb{R}_{++}^d$
such that  
$$d_\phi(x,y) = \abra{x,\log x - \log \mu}
	- \abra{\mathds{1}, x - \mu,}$$
where $\mathds{1}\in\mathbb{R}^d$ is the vector of ones.
It is easy to see that $d_\phi(x)$ reduces to the Kullback-Leibler divergence if 
$\abra{\mathds{1}, x} =1$.
\end{example}
%

The family of Bregman divergences provides proximity measures
that have been shown to enhance the performance of a learning algorithm
\cite{villmann_onlineDLVQmath_2010}.
%
%
There is also a deeper connection of Bregman divergences to 
prototype-based learning algorithms \cite{banerjee2005clustering}.
In the next theorem, we show that
we can have analytical solution to the last optimization step 
(\ref{eq:M}) in a convenient centroid form, if $d$ is a Bregman divergence.

\begin{theorem}
The optimization problem
\begin{equation}
    \min_\mu F^*(\mu)
    \label{eq:minFstar}
\end{equation}
where $F^*(\mu)$ is defined in (\ref{eq:Fstar}) is solved by the 
codevector locations $\mu$ given by  
\begin{equation}
\mu_i^* = \E{X|\mu_i} = \frac{\int x p(x) p^*(\mu_i|x) ~dx}{p^*(\mu_i)}
\label{eq:mu_star}
\end{equation}
if $d:=d_\phi$ is a Bregman divergence for some function 
$\phi$ that satisfies Definition \ref{def:BregmanD}.
\label{thm:bregman_in_DA}
\end{theorem}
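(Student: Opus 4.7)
My plan is to work from the first-order optimality condition derived in equation (\ref{eq:M}) and exploit the explicit form of the Bregman gradient given in (\ref{eq:dd_phi}). The key observation is that equation (\ref{eq:M}) decouples across the index $i$, since neither $p(x)$ nor $p^*(\mu_i|x)$ couples distinct codevectors once we vary only $\mu_i$. This reduces the problem to showing, for each $i$ separately, that
\begin{equation*}
\int p(x)\, p^*(\mu_i|x)\, \pder{d_\phi(x,\mu_i)}{\mu_i}\, dx = 0
\end{equation*}
admits the claimed centroid as its unique solution.

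Next I would substitute the Bregman derivative formula (\ref{eq:dd_phi}), yielding
\begin{equation*}
-\nabla^2 \phi(\mu_i) \int p(x)\, p^*(\mu_i|x)\, (x - \mu_i)\, dx = 0.
\end{equation*}
Since $\phi$ is strictly convex on the (nonempty compact convex) data space $S$, the Hessian $\nabla^2 \phi(\mu_i)$ is positive definite and therefore invertible. Multiplying on the left by its inverse leaves $\int p(x)\, p^*(\mu_i|x)\, (x - \mu_i)\, dx = 0$, which rearranges to
\begin{equation*}
\mu_i = \frac{\int x\, p(x)\, p^*(\mu_i|x)\, dx}{\int p(x)\, p^*(\mu_i|x)\, dx} = \frac{\int x\, p(x)\, p^*(\mu_i|x)\, dx}{p^*(\mu_i)},
\end{equation*}
which is precisely (\ref{eq:mu_star}) and coincides with the conditional expectation $\E{X\mid \mu_i}$.

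It remains to verify that this stationary point is in fact a minimizer of $F^*$ with respect to $\mu_i$, and not merely a critical point. For this I would invoke the well-known Bregman centroid property: for any random variable $X$ with finite mean, the map $\mu \mapsto \E{d_\phi(X,\mu)}$ is minimized uniquely at $\mu = \E{X}$ whenever $d_\phi$ is a Bregman divergence satisfying Definition \ref{def:BregmanD}. Applied to the conditional distribution of $X$ given the Voronoi-like soft assignment $p^*(\mu_i|\cdot)$, and observing that the entropy term $-TH$ depends on $\mu_i$ only through the partition function (whose contribution was already eliminated in the derivation of (\ref{eq:M})), this guarantees that the centroid solution minimizes $F^*$ coordinatewise.

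The main obstacle I anticipate is the last step: cleanly arguing that the centroid is a global minimum of $F^*$ in $\mu_i$, rather than only a solution of the first-order condition. The subtlety is that $p^*(\mu_i|x)$ itself depends on $\mu_i$ through (\ref{eq:gibbs}), so $F^*$ is not obviously convex in $\mu_i$ and one cannot simply appeal to convexity of $d_\phi(x,\cdot)$. The derivation of (\ref{eq:M}) handles this by cancellation of the terms involving $\frac{d}{d\mu}p^*(\mu_i|x)$, so the stationarity characterization is correct; the global-optimality claim then rests on the Bregman centroid lemma together with the fact that (\ref{eq:mu_star}) is a fixed-point condition rather than an explicit formula, requiring one to treat the iterative alternating minimization (over $\{p(\mu_i|x)\}$ and $\mu$) as the actual optimization procedure whose fixed points satisfy (\ref{eq:gibbs})–(\ref{eq:mu_star}) simultaneously.
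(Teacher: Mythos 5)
Your core argument --- substituting the Bregman derivative formula (\ref{eq:dd_phi}) into the stationarity condition (\ref{eq:M}), cancelling the invertible Hessian $\nabla^2\phi(\mu_i)$ by strict convexity, and rearranging to the centroid form using $\int p(x)p^*(\mu_i|x)\,dx = p^*(\mu_i)$ --- is exactly the paper's proof, which consists of precisely this two-line substitution. Your additional discussion of second-order and global optimality goes beyond what the paper establishes in this theorem (the paper only verifies the first-order condition here and defers the question of when the stationary point remains a minimum to the bifurcation analysis of Section \ref{sSec:bifurcation}), so it is a legitimate caveat rather than a required step.
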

\begin{proof}
Given (\ref{eq:dd_phi}), (\ref{eq:M}) becomes
\begin{equation}
\int (x-\mu_i) p(x) p^*(\mu_i|x) ~dx= 0
\end{equation}
which is equivalent to (\ref{eq:mu_star}) since 
$\int p(x) p^*(\mu_i|x) ~dx = p^*(\mu_i)$.
\end{proof}
%

\subsection{Bifurcation and The Number of Clusters}
\label{sSec:bifurcation}

So far, we have assumed a countably infinite set of codevectors.
In this section we will show that the distribution of the quantizer $Q$
is actually discrete and takes values from a finite set of $K(T)$ codevectors 
which we call ``effective codevectors''.
Both the number and the locations of the codevectors will depend on the value of
the temperature parameter $T$.
These effective codevectors are the only parameters that an algorithmic implementation will 
need to store in memory.

First, notice that as $T\rightarrow\infty$, equation (\ref{eq:gibbs}) yields
uniform association probabilities $p(\mu_i|x)=p(\mu_j|x),\ \forall i,j, \forall x$.
As a result of (\ref{eq:mu_star}), all codevectors are located at the same point:
\begin{align*}
\mu_i = \E{X},\ \forall i
\end{align*}
which means that there is one unique effective codevector given by $\E{X}$.

As $T$ is lowered below a critical value, a bifurcation phenomenon occurs, 
when the number of effective codevectors increases, 
which is a physical analogy with chemical annealing processes.
Mathematically, it occurs when the existing solution $\mu^*$ given by (\ref{eq:mu_star}) 
is no longer the minimum of the free energy $F^*$,
as the temperature $T$ crosses a critical value.
Following principles from variational calculus, 
we can rewrite the necessary condition for optimality (\ref{eq:M}) as
\begin{equation}
    \frac{d}{d\epsilon} F^*(\mu+\epsilon \psi)|_{\epsilon=0} = 0
\end{equation}
with the second order condition being 
\begin{equation}
    \frac{d^2}{d\epsilon^2} F^*(\cbra{\mu+\epsilon \psi})|_{\epsilon=0} \geq 0
    \label{eq:soc}
\end{equation}
for all choices of finite perturbations $\cbra{\psi}$.
Here we will denote by $\cbra{y := \mu + \epsilon \psi}$ a perturbed codebook, 
where $\psi$ are perturbation vectors applied to the codevectors $\mu$, and 
$\epsilon\geq 0$ is used to scale the magnitude of the perturbation. 
Bifurcation occurs when equality is achieved in (\ref{eq:soc})
and hence the minimum is no longer stable%
\footnote{For simplicity we ignore higher order derivatives, which should be checked for mathematical completeness, but which are of minimal practical importance. The result is a necessary condition for bifurcation.}.
These conditions are described in the following theorem:
\begin{theorem}
Bifurcation occurs under the following condition 
\begin{equation}
    \exists y_n \text{  s.t.  } p(y_n)>0 \text{  and  } \det\sbra{ I -  T \pder{^2 \phi(y_n)}{y_n^2} C_{x|y_n}} = 0
\end{equation}
where $C_{x|y_n} := \E{(x-y_n) (x-y_n)^\T|y_n}$.
\end{theorem}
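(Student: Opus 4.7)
The plan is to reduce the second-order condition (\ref{eq:soc}) to a matrix determinant condition through an explicit perturbation expansion of $F^*$. The natural starting point is to substitute the Gibbs solution (\ref{eq:gibbs}) back into the Lagrangian, collapsing it to the closed form $F^*(\mu) = -T\,\E{\log Z(X,\mu)}$ with $Z(x,\mu) = \sum_j e^{-d_\phi(x,\mu_j)/T}$, in which the codevector positions are the only remaining free variables.

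Fix an effective codevector $y_n$ with $p(y_n) > 0$ and a perturbation direction $\psi$, and set $y_n \mapsto y_n + \epsilon\psi$. Using identity (\ref{eq:dd_phi}), the first $\epsilon$-derivative of $\log Z$ at $\epsilon = 0$ equals $\tfrac{1}{T}p^*(y_n|x)\,v_n$ with $v_n := \psi^\T \nabla^2\phi(y_n)(x-y_n)$; after taking expectation in $X$, it vanishes by the centroid equation (\ref{eq:mu_star}), confirming that $y_n$ is a critical point. The second $\epsilon$-derivative of $\log Z$ splits into a ``curvature'' piece proportional to $\psi^\T \nabla_y^2 d_\phi(x,y_n)\psi$ and a ``variance'' piece proportional to $p^*(y_n|x)\pbra{1-p^*(y_n|x)}v_n^2/T^2$, arising respectively from differentiating $d_\phi$ twice in its second argument and from differentiating the Gibbs normalization.

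Integrating against $p(x)$, the curvature contribution collapses to $p(y_n)\,\psi^\T\nabla^2\phi(y_n)\psi$, because the $\nabla^3\phi(y_n)(x-y_n)$ piece of $\nabla_y^2 d_\phi$ is linear in $x-y_n$ and is annihilated by the centroid equation. The variance contribution, through the identity $v_n^2 = \psi^\T\nabla^2\phi(y_n)(x-y_n)(x-y_n)^\T \nabla^2\phi(y_n)\psi$, produces the conditional covariance $C_{x|y_n}$ sandwiched between two copies of $\nabla^2\phi(y_n)$. Collecting the two contributions, (\ref{eq:soc}) becomes the requirement $p(y_n)\,\psi^\T \nabla^2\phi(y_n)\sbra{I - T\,\nabla^2\phi(y_n) C_{x|y_n}}\psi \geq 0$ for all $\psi$.

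Because $\phi$ is strictly convex, $\nabla^2\phi(y_n)$ is positive definite and drops out of the sign analysis; bifurcation corresponds to equality, i.e., to singularity of $I - T\,\nabla^2\phi(y_n) C_{x|y_n}$, which is exactly the stated determinant condition (the order of factors inside the bracket can be swapped via $\det(I-AB) = \det(I-BA)$ if preferred). The main technical obstacle is the $\pbra{1-p^*(y_n|x)}$ factor inherited from the Gibbs normalization: arriving at the pure $C_{x|y_n}$ without a distracting self-correlation term requires the effective-codevector viewpoint, in which the mass $p(y_n)$ at the coalesced location is shared among infinitely many candidate codevectors so that each individual $p^*(y_n|x)$ is vanishing while the aggregate remains $p(y_n)$, or equivalently a mass-preserving splitting perturbation at $y_n$ under which the self-correlation term cancels by symmetry.
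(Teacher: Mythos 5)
Your proposal is, at bottom, the same second-variation argument the paper uses: expand $F^*$ to second order in a codevector perturbation, obtain a ``curvature'' term plus a non-negative term coming from the Gibbs normalization (the paper's squared-integral term, your $(p^*)^2$ self-correlation), and then kill the latter with a mass-preserving splitting perturbation at a coalesced location $y_n$ so that stability is decided by the curvature-minus-covariance bracket alone. Your repackaging of $F^*$ as $-T\,\E{\log Z}$ is a clean bookkeeping device that makes the second term transparently a variance under the Gibbs measure, and your closing paragraph correctly identifies and resolves the one genuinely delicate point --- that a single-codevector perturbation leaves a $p^*(y_n|x)\pbra{1-p^*(y_n|x)}$ factor rather than the clean $C_{x|y_n}$ --- in exactly the way the paper does.

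The one thing you must repair is the temperature scaling, where your conclusion contradicts your own computation. You correctly find that $\tfrac{d^2}{d\epsilon^2}\log Z$ contributes $p^*(1-p^*)v_n^2/T^2$; multiplying by the prefactor $-T$ of $F^*$ and integrating, the variance contribution is $-\tfrac{1}{T}\,p(y_n)\,\psi^\T\nabla^2\phi(y_n)\,C_{x|y_n}\,\nabla^2\phi(y_n)\,\psi$, while the curvature contribution $p(y_n)\,\psi^\T\nabla^2\phi(y_n)\psi$ carries no factor of $T$. Collecting terms therefore gives $p(y_n)\,\psi^\T\nabla^2\phi(y_n)\sbra{I-\tfrac{1}{T}\,C_{x|y_n}\nabla^2\phi(y_n)}\psi\geq 0$, i.e., the degeneracy condition $\det\sbra{I-\tfrac{1}{T}\nabla^2\phi(y_n)\,C_{x|y_n}}=0$, whereas your last line (and the displayed statement) has $T$ in place of $\nicefrac{1}{T}$; nothing in your argument licenses that substitution. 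The $\nicefrac{1}{T}$ version is the one consistent with the surrounding narrative --- the single effective codevector at $\E{X}$ must be stable as $T\to\infty$ and lose stability as $T$ decreases through covariance-dependent thresholds, and for the squared Euclidean distance ($\nabla^2\phi=2I$) it recovers the classical critical temperature $T_c=2\bar\nu$ --- so either state the condition in that form or explicitly reconcile your derivation with the paper's convention; as written, the final step is a non sequitur.
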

\begin{proof}
From direct differentiation the optimality condition (\ref{eq:soc}) becomes
{\small
\begin{equation}
    \begin{aligned}
     &\sum_i p(y_i) \pder{^2 \phi(y_i)}{y_i^2} \psi^\T \sbra{ I -  T \pder{^2 \phi(y_i)}{y_i^2} C_{x|y_i} } \psi 
    \\&\quad
    + T \int p(x) \pbra{\sum_i p(y_i|x) \pder{^2 \phi(y_i)}{y_i^2}(x-y_i)^\T \psi }^2 dx
    = 0 
    \end{aligned}
    \label{eq:bifurcation:finalsoc}
\end{equation}
}
where 
{\small
\begin{equation}
    C_{x|y_i} := \E{(x-y_i) (x-y_i)^\T|y_i} = \int p(x|y_i) (x-y_i) (x-y_i)^\T dx
\end{equation}
}
The left-hand side of (\ref{eq:bifurcation:finalsoc}) 
is positive for all perturbations $\cbra{\psi}$
if and only if the first term is positive. 
To see that, notice that the second term of (\ref{eq:bifurcation:finalsoc})
is clearly non-negative. 
For the left-hand side to be non-positive, the first term needs to 
be non-positive as well, i.e., there should exist at least one codevector value, 
say $y_n$, such that $p(y_n)>0$ and 
$\sbra{ I -  T \pder{^2 \phi(y_n)}{y_n^2} C_{x|y_n}}\preceq 0$.
In this case, there always exist a perturbation vector $\cbra{y}$
such that $y = 0$, $\forall y\neq y_n$, and $\sum_{y=y_n} \psi =0 $, 
that vanishes the second term, i.e., 
$T \int p(x) \pbra{\sum_i p(y_i|x) \pder{^2 \phi(y_i)}{y_i^2}(x-y_i)^\T \psi }^2 dx= 0 $.
In other words we have shown that
\begin{equation}
\begin{aligned}
&\frac{d^2}{d\epsilon^2} F^*(y) > 0 \Leftrightarrow 
\\&\quad
\exists y_n \text{  s.t.  } p(y_n)>0 \text{  and  } \sbra{ I -  T \pder{^2 \phi(y_n)}{y_n^2} C_{x|y_n}}\succ 0
\end{aligned}
\end{equation}
%
which completes the proof. 
\end{proof}

Notice that loss of minimality also implies that the number of effective codevectors has changed, 
otherwise the minimum would be stable. 
In addition, we have showed that bifurcation depends on the temperature coefficient $T$ (and the choice of the Bregman divergence, through the function $\phi$) and 
occurs when 
\begin{equation}
    \frac{1}{T} = \pder{^2 \phi(y_n)}{y_n^2} \bar \nu
\end{equation}
where $\bar \nu$ is the largest eigenvalue of $C_{x|y_n}$.
As a result, the following corollary holds:
\begin{corollary}
The number of effective codevectors always 
remains bounded between two critical temperature values.    
\end{corollary}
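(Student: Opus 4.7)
The plan is to show that $K(T)$, the number of effective codevectors, is a locally constant function of the temperature, so that between two consecutive critical temperatures it takes a single well-defined value. The proof would rely on the characterization of bifurcation just established: an effective codevector configuration $\mu^*(T)$ fails to be a strict local minimum of $F^*$ precisely when the matrix $I - T\,\partial^2\phi(y_n)/\partial y_n^2\,C_{x|y_n}$ becomes singular for some $y_n$ with $p(y_n)>0$.

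First I would work on an open interval $(T_2,T_1)$ of temperatures on which the strict positive-definiteness $[I - T\,\partial^2\phi(y_n)/\partial y_n^2\,C_{x|y_n}]\succ 0$ holds for every effective codevector. On such an interval, the full Hessian of $F^*$ at $\mu^*(T)$ — computed exactly as in the proof of the bifurcation theorem — is positive definite, because the two terms appearing in (\ref{eq:bifurcation:finalsoc}) are then both strictly positive for every nonzero admissible perturbation $\psi$. With the stationarity condition (\ref{eq:M}) serving as the defining equation and this nondegenerate Hessian, the Implicit Function Theorem yields a smooth branch $T \mapsto \mu^*(T)$ on $(T_2,T_1)$.

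Next I would argue that smooth variation of the codevector locations, together with continuity of the Gibbs probabilities (\ref{eq:gibbs}) and of the induced marginals $p^*(\mu_i)$, forces the set of distinct effective codevectors to remain combinatorially unchanged on $(T_2,T_1)$. A new effective codevector cannot appear because that would require the coalesced solution to lose minimality, which is exactly the bifurcation condition excluded on the open interval; conversely, two distinct effective codevectors cannot merge without producing a singular Hessian at the merge instant, again a critical event. Hence $K(T)$ is constant on $(T_2,T_1)$, and therefore remains bounded as $T$ ranges over this interval between the two critical temperatures $T_1$ and $T_2$.

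The main obstacle I anticipate is the careful handling of the ``merging/splitting'' bookkeeping: our formulation starts from a countably infinite set of candidate codevectors, while $K(T)$ counts only the distinct locations carrying positive mass. I would handle this by quotienting by the equivalence relation ``same location'', so that each effective codevector is represented once, and by verifying that the perturbation vectors $\psi$ used in (\ref{eq:soc}) — including the sum-to-zero perturbations within a coalesced group used in the bifurcation proof — span exactly the directions along which a split could occur. Once this is in place, the correspondence ``change in $K(T)$ $\Longleftrightarrow$ bifurcation'' becomes tight, and the corollary follows immediately: in any temperature interval avoiding critical values, $K(T)$ is fixed, hence bounded.
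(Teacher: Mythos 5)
Your proposal is correct and follows essentially the same route as the paper, which states the corollary as an immediate consequence of the bifurcation theorem: the number of effective codevectors can change only when the second-order condition degenerates, i.e., only at the critical temperatures $\nicefrac{1}{T} = \pder{^2\phi(y_n)}{y_n^2}\,\bar{\nu}$, so $K(T)$ is constant on any interval between consecutive critical values. The paper offers no further argument, whereas you supply the additional rigor (Implicit Function Theorem on the stationarity condition, local constancy of the combinatorial structure, and the quotienting of coalesced codevectors) that makes the ``change in $K(T)$ $\Leftrightarrow$ bifurcation'' equivalence precise; this is a welcome elaboration, not a different approach.
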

In other words, an algorithmic implementation needs only
as many codevectors as the number of effective codevectors, which
depends only on changes of the temperature parameter below certain thresholds that
depend on the dataset at hand and the dissimilarity measure used.
As shown in Alg. \ref{alg:ODA}, 
we can detect the bifurcation points 
by introducing perturbing pairs of codevectors at each 
temperature level $T$.
In this way, the codevectors $\mu$ are doubled by inserting a perturbation of each $\mu_i$ in 
the set of effective codevectors.
The newly inserted codevectors will merge with their pair if 
a critical temperature has not been reached and separate otherwise.
For more details about the implementation of the algorithm the reader is referred to 
\cite{mavridis2022online}.

%

\subsection{The Online Learning Rule}

The conditional expectation $\E{X|\mu}$ in eq. (\ref{eq:mu_star})
can be approximated by the sample mean 
of the data points weighted 
by their association 
probabilities $p(\mu|x)$, i.e., 
$\Eh{X|\mu} = \frac{\sum x p(\mu|x)}{p(\mu)}$. 
This approach, however, defines an offline (batch) optimization algorithm 
and requires the entire dataset to be available a priori,
subtly assuming that it is possible to store 
and also quickly access the entire dataset at each iteration. 
This is rarely the case in practical applications and 
results to computationally costly iterations that are slow to converge.

We propose an Online Deterministic Annealing (ODA) algorithm, 
that dynamically updates 
its estimate of the effective codevectors with every observation.
This results in a significant reduction in complexity, 
that comes in two levels.
The first refers to huge reduction in memory complexity, 
since we bypass the need to store the entire dataset, 
as well as the association probabilities 
$\cbra{p(\mu_i|x),\ \forall x, i}$ that map each
data point in the dataset to each cluster.
The second level refers to the nature of the optimization iterations. 
In the online approach 
the optimization iterations increase in number
but become much faster, and practical convergence is often reached 
after a smaller number of observations.
%
%
%
%
%
To define an online training rule 
for the deterministic annealing framework,
we formulate a stochastic approximation algorithm 
to recursively estimate $\E{X|\mu}$ directly.
The following theorem provides a means towards constructing 
a gradient-free stochastic approximation training rule for 
the online deterministic annealing algorithm.

\begin{theorem}
Let $S$ a vector space, $\mu\in S$, and
$X: \Omega \rightarrow S$
be a random variable defined in a 
probability space $\pbra{\Omega, \mathcal{F}, \mathbb{P}}$.
Let $\cbra{x_n}$ be a sequence of independent realizations of $X$,
and $\cbra{\alpha(n)>0}$ a sequence of stepsizes such that
$ \sum_n \alpha(n) = \infty$, and $\sum_n \alpha^2(n) < \infty$.
Then the random variable $m_n = \nicefrac{\sigma_n}{\rho_n}$,
where $(\rho_n, \sigma_n)$ are 
sequences defined by
\begin{equation}
\begin{aligned}
\rho_{n+1} &= \rho_n + \alpha(n) \sbra{ p(\mu|x_n) - \rho_n} \\
\sigma_{n+1} &= \sigma_n + \alpha(n) \sbra{ x_n p(\mu|x_n) - \sigma_n},
\end{aligned}
\label{eq:rhosigma}
\end{equation}
converges to $\E{X|\mu}$ almost surely, i.e. 
$m_n\xrightarrow{a.s.} \E{X|\mu}$.
\label{thm:oda_sa}
\end{theorem}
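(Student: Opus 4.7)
The plan is to recognize \eqref{eq:rhosigma} as a coupled stochastic approximation scheme on $\mathbb{R}\times S$ and to apply Theorem \ref{thm:borkar} directly to the joint iterate $z_n := (\rho_n,\sigma_n)$, concluding convergence of $(\rho_n,\sigma_n)$ to the equilibrium of an explicit linear ODE, and then passing to the ratio by continuity. Writing the updates in the canonical form $z_{n+1}=z_n+\alpha(n)[h(z_n)+M_{n+1}]$, the drift decouples as
\begin{equation*}
h(\rho,\sigma) = \pbra{\E{p(\mu|X)}-\rho,\; \E{X\,p(\mu|X)}-\sigma},
\end{equation*}
while the noise $M_{n+1}$ is the difference between the instantaneous terms $(p(\mu|x_n),x_n p(\mu|x_n))$ and their conditional expectations given $\mathcal{F}_n:=\sigma(\rho_m,\sigma_m,m\leq n)$.

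Next I would verify the four assumptions of Theorem \ref{thm:borkar}. The map $h$ is affine, hence trivially Lipschitz, giving (A1). The stepsize condition (A2) is part of the hypothesis. For (A3), $M_{n+1}$ is a martingale difference sequence by construction, and since $p(\mu|x)\in[0,1]$ and $X$ takes values in the compact data space $S$, the increments are uniformly bounded, so the square-integrability bound holds with any sufficiently large $K$. For (A4), I would exploit the fact that each component of the deterministic part of the update is a convex combination $(1-\alpha(n))\rho_n+\alpha(n)\,p(\mu|x_n)$ (resp.\ with $x_np(\mu|x_n)$), which, combined with the uniformly bounded driving terms and $\alpha(n)\to 0$, yields boundedness of the iterates by an elementary induction once $\alpha(n)\leq 1$.

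The associated ODE
\begin{equation*}
\dot\rho = \E{p(\mu|X)}-\rho,\qquad \dot\sigma = \E{X\,p(\mu|X)}-\sigma
\end{equation*}
is a pair of globally exponentially stable linear systems with the unique equilibrium $(\rho^\star,\sigma^\star)=(\E{p(\mu|X)},\E{X\,p(\mu|X)})$, which is the only internally chain transitive invariant set. Corollary \ref{crl:sa_equillibria} then yields $\rho_n\to\rho^\star$ and $\sigma_n\to\sigma^\star$ almost surely. Observing that $\rho^\star=\int p(\mu|x)p(x)\,dx=p(\mu)$ and $\sigma^\star=\int x\,p(\mu|x)p(x)\,dx=p(\mu)\E{X|\mu}$, and that the map $(\rho,\sigma)\mapsto\sigma/\rho$ is continuous on $\{\rho>0\}$, the continuous mapping theorem applied pathwise gives $m_n=\sigma_n/\rho_n\to\E{X|\mu}$ almost surely, as required.

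The main obstacle I expect is twofold. First, a clean verification of (A4): although the affine, contractive structure of $h$ makes boundedness morally obvious, the stochastic version requires a careful argument (e.g.\ via a quadratic Lyapunov function $V(z)=\norm{z-z^\star}^2$, exploiting the fact that $h$ points inward in the sense $\abra{z-z^\star,h(z)}=-\norm{z-z^\star}^2$). Second, the ratio step tacitly requires $\rho^\star=p(\mu)>0$; this is genuinely an assumption on $\mu$ being a codevector with positive marginal probability, which is guaranteed for any effective codevector appearing in the bifurcation analysis of Section \ref{sSec:bifurcation} but must be stated. Beyond these two points, the argument is a textbook application of the ODE method.
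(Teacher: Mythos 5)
Your proposal follows essentially the same route as the paper's own proof: recast \eqref{eq:rhosigma} as a stochastic approximation scheme in the form of Theorem \ref{thm:borkar}, identify the limiting ODE $\dot\rho = p(\mu)-\rho$, $\dot\sigma = \E{\mathds{1}_{\sbra{\mu}}X}-\sigma$ with its unique equilibrium $\pbra{p(\mu),\E{\mathds{1}_{\sbra{\mu}}X}}$, and pass to the ratio $m_n=\sigma_n/\rho_n$ via continuity. Your treatment is in fact somewhat more careful than the paper's, which asserts the verification of (A1)--(A4) as ``obvious''; your explicit checks of boundedness and of the requirement $p(\mu)>0$ for the ratio step are worthwhile additions rather than deviations.
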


\begin{proof}
We will use the facts that $p(\mu)=\E{p(\mu|x)}$ and 
$\E{\mathds{1}_{\sbra{\mu}}X} = \E{xp(\mu|x)}$.
The recursive equations (\ref{eq:rhosigma}) are 
stochastic approximation algorithms of the form:
\begin{equation}
\begin{aligned}
\rho_{n+1} &= \rho_n + \alpha(n)  
	[ (p(\mu) - \rho_n) + \\ 
	&\quad\quad\quad\quad\quad\quad\quad\quad
	(p(\mu|x_n)-\E{p(\mu|X)}) ] \\
\sigma_{n+1} &= \sigma_n + \alpha(n) 
	[ (\E{\mathds{1}_{\sbra{\mu}}X} - \sigma_n) + \\
	&\quad\quad\quad\quad\quad\quad
	(x_n p(\mu|x_n) - \E{x_n p(\mu|X)})  ]
\end{aligned}
\label{eq:rhosigma_sa}
\end{equation}
It is obvious that both stochastic approximation algorithms
satisfy the conditions of Theorem \ref{thm:borkar}.
As a result, they converge to the asymptotic solution of the 
differential equations
\begin{equation*}
\begin{aligned}
\dot \rho &= p(\mu) - \rho \\
\dot \sigma &= \E{\mathds{1}_{\sbra{\mu}}X} - \sigma
\end{aligned}
\end{equation*}
which can be trivially derived through standard ODE analysis to 
be $\pbra{p(\mu), \E{\mathds{1}_{\sbra{\mu}}X}}$.
This follows from the fact that the only 
internally chain transitive invariant sets for (\ref{eq:rhosigma_sa})
are the isolated equilibrium points 
$\pbra{p(\mu), \E{\mathds{1}_{\sbra{\mu}}X}}$.
In other words, we have shown that
\begin{equation}
\pbra{\rho_n,\sigma_n} \xrightarrow{a.s.} \pbra{p(\mu), \E{\mathds{1}_{\sbra{\mu}}X}}
\end{equation}
The convergence of $m_n$ follows from the fact that 
$\E{X|\mu} = \nicefrac{\E{\mathds{1}_{\sbra{\mu}}X}}{p(\mu)}$,
and standard results on the convergence 
of the product of two random variables.
\end{proof}

As a direct consequence of this theorem, the following corollary 
provides an online learning rule that solves the
optimization problem of the deterministic annealing algorithm.

\begin{corollary}
The online training rule 
\begin{equation}
\begin{cases}
\rho_i(n+1) &= \rho_i(n) + \alpha(n) \sbra{ \hat p(\mu_i|x_n) - \rho_i(n)} \\
\sigma_i(n+1) &= \sigma_i(n) + \alpha(n) \sbra{ x_n \hat p(\mu_i|x_n) - \sigma_i(n)}
\end{cases}
\label{eq:oda_learning1}
\end{equation}
where the quantities $\hat p(\mu_i|x_n)$ and $\mu_i(n)$ 
are recursively updated 
as follows:
\begin{equation}
\begin{aligned}
\hat p(\mu_i|x_n) &= \frac{\rho_i(n) e^{-\frac{d(x_n,\mu_i(n))}{T}}}
			{\sum_i \rho_i(n) e^{-\frac{d(x_n,\mu_i(n))}{T}}} \\
\mu_i(n) &= \frac{\sigma_i(n)}{\rho_i(n)},
\end{aligned}
\label{eq:oda_learning2}
\end{equation}
converges almost surely to a possibly sample path dependent solution 
of the optimization (\ref{eq:minFstar}), as $n\rightarrow\infty$.
\end{corollary}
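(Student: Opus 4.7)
The plan is to recognize (\ref{eq:oda_learning1})--(\ref{eq:oda_learning2}) as a vector-valued stochastic approximation scheme in the joint iterate $z_n := (\rho_i(n), \sigma_i(n))_{i=1}^{K(T)}$, and then apply Theorem \ref{thm:borkar} together with Corollary \ref{crl:sa_equillibria}. The essential difference from Theorem \ref{thm:oda_sa} is that there $\mu$ is held fixed, whereas here $\mu_i(n) = \sigma_i(n)/\rho_i(n)$ and the Gibbs weights $\hat p(\mu_i|x_n)$ evolve with the iterate, so the coupling must be analyzed jointly rather than by a per-$i$ argument as in the proof of Theorem \ref{thm:oda_sa}.

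I would first put (\ref{eq:oda_learning1}) in the canonical form $z_{n+1} = z_n + \alpha(n)[h(z_n) + M_{n+1}]$ by adding and subtracting the conditional expectation given $\mathcal{F}_n := \sigma(x_m, m \leq n)$. Since $\cbra{x_n}$ are i.i.d. realizations of $X$, the residual is a martingale difference sequence, and the drift is
\begin{align*}
h_{\rho_i}(z) &= \E{\hat p(\mu_i|X)} - \rho_i, \\
h_{\sigma_i}(z) &= \E{X \hat p(\mu_i|X)} - \sigma_i.
\end{align*}
The associated ODE then has equilibria satisfying $\rho_i^* = \E{\hat p(\mu_i^*|X)}$ and $\sigma_i^* = \E{X \hat p(\mu_i^*|X)}$, which yield $\mu_i^* = \sigma_i^*/\rho_i^* = \E{X | \mu_i^*}$, i.e. exactly the stationarity condition (\ref{eq:mu_star}) for $F^*$, with $\hat p(\mu_i^*|x)$ coinciding by Bayes with the Gibbs posterior (\ref{eq:gibbs}) at $\rho_i^* = p^*(\mu_i^*)$.

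Next, I would verify assumptions (A1)--(A4) of Theorem \ref{thm:borkar}. (A4) holds because $\rho_i \in [0,1]$ and $\sigma_i$ stays in the convex hull of the data, since each update is a convex combination; (A3) follows from $\hat p(\mu_i|x) \in [0,1]$ and a standard second-moment assumption on $X$; (A2) is assumed. The main technical obstacle is (A1): the drift depends on $\mu_i = \sigma_i/\rho_i$, which is singular at $\rho_i = 0$. I would address this by initializing $\rho_i(0) > 0$ and restricting the analysis to a compact invariant set of the form $\cbra{\rho_i \geq \underline\rho > 0}$; this is consistent with the drift, since a strictly positive inflow $\E{\hat p(\mu_i|X)} > 0$ against the $-\rho_i$ decay prevents collapse to zero, and on such a set the Gibbs weights and their ratios are smooth, giving Lipschitz continuity of $h$.

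Finally, assuming the equilibria of the limiting ODE are isolated, which holds away from the critical temperatures identified in Section \ref{sSec:bifurcation}, Corollary \ref{crl:sa_equillibria} yields almost sure convergence of $(\rho_n, \sigma_n)$ to a possibly sample-path-dependent equilibrium, and hence $\mu_i(n) = \sigma_i(n)/\rho_i(n)$ converges almost surely to a stationary point of (\ref{eq:minFstar}), proving the claim.
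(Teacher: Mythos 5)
Your proposal is correct, and it is in fact more complete than what the paper provides: the paper states this corollary as a ``direct consequence'' of Theorem \ref{thm:oda_sa} and gives no separate proof, implicitly leaning on the fixed-$\mu$, per-component analysis of that theorem. You correctly identify that this shortcut is not literally available, because in the corollary the codevectors $\mu_i(n)=\sigma_i(n)/\rho_i(n)$ and the Gibbs weights evolve with the iterate, so the recursion is a coupled system rather than $K$ independent averaging schemes. Your remedy --- recasting the whole thing as a single vector-valued stochastic approximation in $z_n=(\rho_i(n),\sigma_i(n))_i$, identifying the drift, and checking (A1)--(A4) with the $\rho_i=0$ singularity handled by confining the iterates to a compact set with $\rho_i\geq\underline\rho>0$ --- is the right way to make the claim rigorous, and the identification of the ODE equilibria with the centroid condition (\ref{eq:mu_star}) correctly ties the limit to stationary points of (\ref{eq:minFstar}). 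What the paper's route buys is brevity (and it is morally a quasi-static argument: for $\mu$ frozen, Theorem \ref{thm:oda_sa} gives the right limit, and one hopes the coupling does not spoil it); what yours buys is an actual proof, at the cost of two additional hypotheses you are honest about: isolation of equilibria away from critical temperatures, and non-degeneracy of $\rho_i$, the latter of which can genuinely fail for idle codevectors (the paper's own $\rho_i<\epsilon_r$ pruning rule is an implicit acknowledgment of this). One small wobble: your claim that $\hat p$ from (\ref{eq:oda_learning2}) coincides ``by Bayes'' with the Gibbs posterior (\ref{eq:gibbs}) at equilibrium glosses over the fact that (\ref{eq:oda_learning2}) is the mass-weighted (mass-constrained DA) variant while Lemma 1 derives the unweighted one; reconciling them requires re-deriving the optimality condition with the priors $p(\mu_i)$ included in the entropy term. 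This discrepancy is inherited from the paper rather than introduced by you, but it deserves a sentence rather than an appeal to Bayes.
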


Finally, the determinsitic annealing algorithm with the learning rule
(\ref{eq:oda_learning1}), (\ref{eq:oda_learning2}) can be used to 
define a consistent (histogram)
density estimator at the limit $T\rightarrow 0$.
In the limit $\lambda\rightarrow 0$, and as 
the number of observed samples $\cbra{x_n}$ goes to infinity,
i.e., $n\rightarrow\infty$,
the learning algorithm based on
(\ref{eq:oda_learning1}), (\ref{eq:oda_learning2}),
results in a codevector $\mu$ that 
constructs a consistent density estimator.
This follows from the fact that as $T\rightarrow 0$, 
we get $p^*(\mu_i|x) \rightarrow \mathds{1}_{\sbra{x\in S_i}}$ and
$K\rightarrow\infty$, i.e., the number of effective codevectors $K$ goes to infinity. 
As a result, it can be shown that $Vol(S_i)\rightarrow 0$, where
$S_i = \cbra{x \in S: i = \argmin\limits_j ~ d(x,\mu_j)}$.
Then, 
$\hat p(x) = 
\frac{\sum_i \mathds{1}_{\sbra{x\in S_i}}}{n Vol(S_i)}$
is a consistent density estimator.
The proof follows similar arguments 
to the stochastic vector quantization algorithm 
(see, e.g., \cite{devroye2013})
but is omitted due to space limitations.
%

\subsection{Online Deterministic Annealing for Supervised Learning}

The same learning algorithm 
can be extended for classification as well. 
A multi-class classification problem involves 
a pair of random variables 
$\cbra{X,c} \in S\times S_c$ defined in a probability space
$\pbra{\Omega, \mathcal{F}, \mathbb{P}}$, with
$c\in S_c$ representing the class of $X$ and $S\subseteq\mathbb{R}^d$.	
The codebook is represented by
$\mu := \cbra{\mu_i}_{i = 1}^K$, $\mu_i \in ri(S)$, and
$c_\mu := \cbra{c_{\mu_i}}_{i = 1}^K$,
such that $c_{\mu_i} \in S_c$ 
represents the class of $\mu_i$ for all $i \in \cbra{1,\ldots,K}$.

We can approximate the optimal solution of a minimum classification error
problem by using the distortion measure
\begin{equation}
d^c(x,c_x,\mu,c_\mu) = \begin{cases}
				d(x,\mu),~ c_x=c_\mu \\
				0,~ c_x\neq c_\mu			
				\end{cases}
\label{eq:dc}
\end{equation}
It is easy to see that this particular choice for the distortion measure $d^c$ 
in (\ref{eq:dc}) transforms the learning rule in (\ref{eq:oda_learning1}) to 
\begin{equation}
\begin{cases}
\rho_i(n+1) &= \rho_i(n) + \beta(n) \sbra{ s_i \hat p(\mu_i|x_n) - \rho_i(n)} \\
\sigma_i(n+1) &= \sigma_i(n) + \beta(n) \sbra{ s_i x_n \hat p(\mu_i|x_n) - \sigma_i(n)}
\end{cases}
\label{eq:oda_learning1c}
\end{equation}
where $s_i:=\mathds{1}_{\sbra{c_{\mu_i}=c}}$.
As a result, this is equivalent to estimating strongly consistent
class-conditional density estimators:
\begin{equation}
     \hat p(x|c=j) \rightarrow \pi_j p(x|c=j) ,\ a.s.
\end{equation}
where $\pi_i := \mathbb{P}\sbra{c = i}$.
This results in a Bayes-optimal classification scheme.
The proof is beyond the scope of this paper, since we will focus our attention 
to reinforcement learning.
As a side note, a practical classification rule such as
the nearest-neighbor rule:
\begin{equation}
\hat c(x) = c_{\mu_{h^*}}
\end{equation}
where $h^* = \argmax\limits_{\tau = 
		1,\ldots,K} ~ p(\mu_\tau|x),~ h \in \cbra{1, \ldots, K}$,
results in an easy=to-implement classifier with tight upper bound, i.e.,  
$ J_B^* \leq \hat J_B^* \leq 2 J_B^*$, where 
$J_B$ represents the optimal Bayes error (see, e.g., \cite{devroye2013}).

\subsection{The Algorithm}
\label{sSec:Algorithm}

The Online Deterministic Annealing (ODA) algorithm %
for both clustering and classification 
is shown in Algorithm \ref{alg:ODA}
and its source code is publicly available%
\footnote{
https://github.com/MavridisChristos/OnlineDeterministicAnnealing}.
The temperature parameter $T_i$ is reduced using the geometric series 
$T_{i+1}=\gamma T_i$, for $\gamma<1$.
%
The temperature schedule $\cbra{T_i}$ affects the behavior of the algorithm
by introducing the following trade-off:
small steps $T_i-T_{i-1}$ are theoretically expected to give better results, i.e., 
not miss any bifurcation points, but larger steps provide computational benefits.

\begin{remark}
Notice that the temperature schedule  and its values depend on
the range of the domain of the data.
When the input domain is not known a prior, we can use the formulation 
(\ref{eq:Flambda}), and (\ref{eq:Tlambda}), substituting $T_i$ by 
$
 T_i:=\frac{1-\lambda_i}{\lambda_i}  ,\quad \lambda_i\in\sbra{0,1}.
$
\end{remark}

Regarding the stochastic approximation stepsizes, 
simple time-based learning rates of 
the form $\alpha_n = \nicefrac{1}{a+ bn}$, $a,b>0$,
have experimentally shown to be sufficient 
for fast convergence. 
Convergence is checked with the condition
$T d_\phi(\mu_i^n,\mu_i^{n-1})<\epsilon_c$
for a given threshold $\epsilon_c$.
This condition becomes harder as the value of $T$ decreases.
Exploring adaptive learning rates is among the authors' future 
research direction.
The stopping criteria $SC_{stop}$ can include
a maximum number of codevectors $K_{max}$ allowed,
a minimum temperature $T_{min}$ to be reached, 
a minimum distortion/classification error $e_{target}$ to be reached,
a maximum number of iterations $i_{max}$ reached, and so on.

Bifurcation, at $T_i$, is detected by maintaining a pair of perturbed 
codevectors $\cbra{\mu_j+\delta, \mu_j-\delta}$ 
for each effective codevector $\mu_j$ generated by the algorithm 
at $T_{i-1}$,
i.e. for $j=1\ldots,K_{i-1}$.
Using arguments from variational calculus (see Section \ref{sSec:bifurcation}),
it is easy to see that, upon convegence, 
the perturbed codevectors will merge if a critical 
temperature has not been reached, and will get separated otherwise. 
Therefore, the cardinality of the model is at most doubled at 
every temperature level.
These are the effective codevectors discussed in Section \ref{sSec:bifurcation}.
For classification, a perturbed codevector for each class is generated.
Merging is detected by the condition 
$T d_\phi(\mu_j,\mu_i)<\epsilon_n$,
where $\epsilon_n$ is a design parameter
that acts as a regularization term for the model 
that controls the number of effective codevectors.
These comparisons need not be in any specific order and the worst-case number of 
comparisons is $N_k = \sum_{i=1}^{K-1} i$, which scales with $O(K^2)$.
An additional regularization mechanism
is the detection of idle codevectors, which 
is checked by the condition $\rho_i(n)<\epsilon_r$, 
where $\rho_i(n)$ can be seen as an
approximation of the probability $p(\mu_i,c_{\mu_i})$.
In practice, $\epsilon_c$, $\epsilon_n$, $\epsilon_r$ are assigned similar values and 
their impact on the performance is similar to any threshold parameter that detects convergence.

The complexity of Alg. \ref{alg:ODA} for a fixed temperature coefficient $T_i$ is 
$O(N_{c_i} (2K_i)^2 d)$,
where $N_{c_i}$ is the number of stochastic approximation iterations needed for convergence 
which corresponds to the number of data samples observed, 
$K_i$ is the number of codevectors of the model at temperature $T_i$, and 
$d$ is the dimension of the input vectors, i.e., $x\in\mathbb{R}^d$.
Therefore, assuming a training dataset of $N$ samples and
a temperature schedule $\cbra{ T_1=T_{max}, T_2, \ldots, T_{N_T}=T_{min} }$, 
the worst case complexity of Algorithm \ref{alg:ODA} becomes:
\begin{align*}
O(N_{c} (2\bar K)^2 d)
\end{align*}
where $N_c=\max_i \cbra{N_{c_i}}$ is an upper bound on the number of data samples observed
until convergence at each temperature level, and
\begin{align*}
N_T \leq \bar K \leq \min\cbra{ \sum_{n=0}^{N_T-1} 2^n, \sum_{n=0}^{\log_2 K_{max}} 2^n}
< N_T K_{max}
\end{align*}
where the actual value of $\bar K$
depends on the bifurcations occurred as a result of reaching critical temperatures
and the effect of the regularization mechanisms described above.
Note that typically $N_c \ll N$ as a result of the stochastic approximation algorithm,
and $\bar K \ll N_T K_{max}$ as a result of the progressive nature of the ODA 
algorithm.

As a final note, because the convergence to the Bayes decision surface 
comes in the limit $(K,T)\rightarrow (\infty,0)$, 
in practice, a fine-tuning mechanism can be designed to run on top of 
Alg. \ref{alg:ODA} after a predefined threshold temperature $T_{min}$.
This can be either an LVQ algorithm \cite{sato1996generalized}
or some other local model, 
i.e., we can use the partition created by Alg. \ref{alg:ODA}
to train local models in each region of the data space.


\begin{algorithm}[hb!]
\caption{Online Deterministic Annealing (ODA)} 
\label{alg:ODA}
\begin{algorithmic}
\STATE Select a Bregman divergence $d_\phi$
\STATE Set stopping criteria $SC_{stop}$ (e.g., $K_{max}$, $T_{min}$)
\STATE Set convergence parameters: $\gamma$, $\cbra{\alpha_n}$, 
	$\epsilon_c$, $\epsilon_n$, $\epsilon_r$, $\delta$ 
\STATE Initialize:	$K = 1$, $\lambda = 1-\epsilon$, $T = \nicefrac{1-\lambda}{\lambda}$,
$\cbra{\mu^i}$, $\cbra{c_{\mu^i} = c,~ \forall c \in\mathcal{C} }$, 
$\cbra{p(\mu^i) = 1}$, $\cbra{\sigma(\mu^i) = \mu^i p(\mu_i)}$ 
\REPEAT
\STATE Perturb  
	$\mu^i \gets  
		\cbra{\mu^i+\delta, \mu^i-\delta}$, $\forall i$ 
\STATE Update $K\gets 2K$,
$p(\mu^i)$, $\sigma(\mu^i)\gets\mu^i p(\mu^i)$, $\forall i$ 
\STATE $n \gets 0$
\REPEAT 
%
\STATE Observe data point $x$ and class label $c$
\FOR{$i = 1,\ldots, K$} 
\STATE Compute membership $s^i = \mathds{1}_{\sbra{c_{\mu^i}=c}}$ 
\STATE Update: 
\vskip -0.3in
	\begin{align*}
	p(\mu^i|x) &\gets \frac{p(\mu^i) e^{-\frac{d_\phi(x,\mu^i)}{T}}}
			{\sum_j p(\mu^j) e^{-\frac{d_\phi(x,\mu^j)}{T}}} \\
	p(\mu^i) &\gets p(\mu^i) + \alpha_n \sbra{s^i p(\mu^i|x) - p(\mu^i)} \\
	\sigma(\mu^i) &\gets \sigma(\mu^i) + 
		\alpha_n \sbra{s^i x p(\mu^i|x) - \sigma(\mu^i)} \\
	\mu^i &\gets \frac{\sigma(\mu^i)}{p(\mu^i)}	
	\end{align*}
\vspace{-1.5em}
\STATE $n\gets n+1$
\ENDFOR
\UNTIL Convergence: $T d_\phi(\mu^i_n,\mu^i_{n-1})<\epsilon_c$, $\forall i $
\STATE Keep effective codevectors: 
    discard $\mu^i$ if $T d_\phi(\mu^j,\mu^i)<\epsilon_n$, 
	$\forall i,j,i\neq j$
\STATE Remove idle codevectors: 
    discard $\mu^i$ if $p(\mu^i)<\epsilon_r$, $\forall i$
\STATE Update $K$, $p(\mu^i)$, $\sigma(\mu^i)$, $\forall i$
\STATE Lower temperature: $\lambda \gets \gamma \lambda$, $T \gets \frac{1-\lambda}{\lambda}$  
\UNTIL $SC_{stop}$
\end{algorithmic}
\end{algorithm}

\section{Online Deterministic Annealing for Reinforcement Learning}
\label{Sec:RL}

The learning architecture of Alg. \ref{alg:ODA}
can also be integrated with reinforcement learning methods, giving rise 
to adaptive state-action aggregation schemes.
As will be shown in this section, this is a result of 
using stochastic approximation as a training rule, and 
yields a 
reinforcement learning algorithm based on a progressively changing 
underlying model \cite{mavridis2021vector,mavridis2021maximum}.
%


We consider an MDP $(\mathcal{X}, \mathcal{U}, \mathcal{P}, C)$,
where $S\subseteq \mathbb{R}^{d_X}$,
$S\subseteq \mathbb{R}^{d_U}$ are compact convex sets
(see Section \ref{sSec:Qlearning}).
We are interested in the approximation of the quality function 
$Q:\mathcal{X}\times\mathcal{U}\rightarrow\mathbb{R}_+$.
%
%
To this end, we use the online deterministic annealing (ODA)
algorithm (Alg. \ref{alg:ODA}) as an online recursive algorithm that finds
an optimal representation of the data space with respect to a
trade-off between minimum average distortion and maximum entropy.  
We define a quantizer 
$Q_P(x,u) = \sum_{h=1}^K \mu_h \mathds{1}_{\sbra{(x,u) \in P_h}}$, 
where $\cbra{P_h}_{h=1}^K$ is a partition of $\mathcal{X}\times \mathcal{U}$.
The parameters $\mu_h:=(m_h,v_h)$ define a state-action aggregation scheme 
with $K$ clusters (aggregate state-action pairs), each represented by 
$m_h\in \mathcal{X}$ and $v_h\in \mathcal{U}$, for $h=1,\ldots, K$.
After convergence, if the representation is meaningful, 
the finite set $\cbra{\mu_h}_{h=1}^K$,
where $\mu_h\in \mathcal{X}\times \mathcal{U}$, can be used 
directly as a piece-wise constant approximation of the $Q$ function.
We stress that the cardinality $K$ of the set of representatives of the space
$\mathcal{X}\times\mathcal{U}$ is automatically updated by Alg. \ref{alg:ODA}
and progressively increases, as needed, with respect to the
complexity-accuracy trade-off presented above.

\begin{remark}
It is also possible to use $\cbra{\mu_h}_{h=1}^K$ as pseudo-inputs
for an adaptive and sparse Gaussian process regression \cite{mavridis2022sparse}, 
but this is beyond the scope of this paper.
\end{remark}


In essence, we are approximating the $Q$ function
with a piece-wise constant parametric model 
with the parameters that define the partition living in the data space
and being chosen by the online deterministic annealing algorithm (Alg. \ref{alg:ODA}).
However, since the system observes its states and actions 
online while learning its optimal policy 
using a temporal-difference 
reinforcement learning algorithm,
the two estimation algorithms need to run at the same time.
This can become possible by observing that 
Algorithm \ref{alg:ODA}, as well as most temporal-difference algorithms, 
are stochastic approximation algorithms.
Therefore, we can design a reinforcement learning algorithm as 
a two-timescale stochastic approximation algorithm with 
(a) a fast component that updates the values $Q:=\cbra{Q(h)}_{h=1}^K$
with a temporal-difference learning algorithm, 
and (b) a slow component that updates the representation
$\mu:=\cbra{\mu_h}_{h=1}^K$ based on Alg. \ref{alg:ODA}.
Such a framework can incorporate different reinforcement
learning algorithms, including the proposed algorithm
presented in Alg. \ref{alg:QlearningODA}.
The exploration policy $\pi_L(h|\mu)$ in Alg. \ref{alg:QlearningODA}
depends on the aggregate state
$h$ and balances the ratio between 
exploration and exploitation. 

\begin{algorithm}
\caption{Reinforcement Learning with ODA}
\begin{algorithmic}
\STATE Initialize $\mu_h$, $Q_0(h)$, $\forall h\in\cbra{1,\ldots,K}$
\REPEAT 
	\STATE Observe $x$ 
	and find 
	$$h= \argmin\limits_{\tau = 1,\ldots,k} ~ d_{\phi} ((x,u'),\mu_\tau)$$	
	\STATE Choose $u'=\pi_L(h|\mu)$	 
	\STATE Observe $x'=f(x,u')$ 
	and find
	$$h'= \argmin\limits_{\tau = 1,\ldots,k} ~ d_{\phi} (x',\mu(\tau))$$  
	\STATE Update $Q(h)$: 
	\begin{align*}
	 Q_{i+1}(h) = Q_{i}(h) &+ \alpha_i [ C(x,u')\\
		  &~~~~~~~ + \gamma \min_{u} Q_{i}(h') - Q_{i}(h) ]
	\end{align*} 
	\IF{$i \mod N = 0$}	
	\STATE Update partition $\mu:=\cbra{\mu_h}_{h=1}^K$ 
		using Alg. \ref{alg:ODA}
	\ENDIF
\UNTIL Convergence 
\STATE Update Policy:
 $$ u^*(x) = \argmin_u \cbra{~ Q(h(x)) ~} $$
\end{algorithmic}
\label{alg:QlearningODA}
\end{algorithm}
%


The convergence properties of the algorithm 
can be studied by directly applying the theory of the O.D.E. method 
for stochastic approximation in multiple timescales 
in Theorem \ref{thm:borkar_timescales}.
For more details see
\cite{borkar2009stochastic}. 
As a result, Alg. \ref{alg:QlearningODA}
converges according to the following theorem:

\begin{theorem}
Algorithm \ref{alg:QlearningODA} converges almost surely to 
$(\mu^*,Q^*)$ where $\mu*$ is a solution of the 
optimization problem (\ref{eq:minFstar}),
and $Q^*$ minimizes the temporal-difference error:
%
{\small
\begin{equation}
J_h = \| \E{C(x,u)+\gamma \min_u Q(h')|(x,u)\in P_h} - Q(h) \|^2
\end{equation}
}
where $h=1,\ldots,K$, 
and $\cbra{P_h}_{h=1}^K$ is a partition of $\mathcal{X}\times\mathcal{U}$
with every $P_h$ assumed to be visited infinitely often.
\label{prop:Convergence}
\end{theorem}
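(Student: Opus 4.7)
The plan is to apply the two-timescale stochastic approximation result (Theorem \ref{thm:borkar_timescales}) with $Q$ playing the role of the fast iterate and $\mu$ the slow one. First I would recast Alg. \ref{alg:QlearningODA} in the canonical form (\ref{eq:sa_timescales_x})--(\ref{eq:sa_timescales_y}) by letting $x_n\equiv Q_n$ with stepsize $\alpha(n)$, and $y_n\equiv \mu_n$ with effective stepsize $\beta(n)=\alpha(n)/N \cdot \mathds{1}_{[n \bmod N=0]}$ (alternatively, a separate slower schedule satisfying $\beta(n)/\alpha(n)\to 0$). I would then identify, as in Section \ref{sSec:Qlearning}, the map $f(Q,\mu)=F_\mu(Q)-Q$ where $F_\mu(Q)(h)=\E{C(x,u)+\gamma \min_u Q(h')\mid (x,u)\in P_h(\mu)}$, and the map $g(Q,\mu)$ corresponding to the ODA recursion (\ref{eq:oda_learning1})--(\ref{eq:oda_learning2}) applied to the joint variable $(x,u)$.

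Next I would verify assumptions (A1)--(A4) of Theorem \ref{thm:borkar}. Lipschitz continuity of $f(\cdot,\mu)$ holds by standard contraction arguments for the Bellman operator, and that of $g$ follows from the smoothness of the Gibbs kernel (\ref{eq:gibbs}) under the Bregman divergence $d_\phi$. The martingale-difference conditions on $M_{n+1}^{(Q)}$ and $M_{n+1}^{(\mu)}$ follow exactly as in the derivations in Sections \ref{sSec:Qlearning} and in Theorem \ref{thm:oda_sa}, since the innovation terms isolate zero-mean noise under the sampling distribution. Boundedness of $\{Q_n\}$ comes from $\gamma\in(0,1]$ contraction together with bounded cost $C$, and boundedness of $\{\mu_n\}$ from the fact that $\mu_h=\sigma_h/\rho_h$ is a convex combination of observed samples, which lie in the compact set $\mathcal{X}\times\mathcal{U}$.

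With the SA framework in place, I would invoke the two asymptotically-stable-equilibrium requirements of Theorem \ref{thm:borkar_timescales}. For fixed $\mu$, the fast o.d.e. $\dot Q = F_\mu(Q)-Q$ has the unique globally asymptotically stable equilibrium $\lambda(\mu):=Q^*_\mu$, the fixed point of the aggregate Bellman operator, and Lipschitz dependence of $\lambda$ on $\mu$ follows from the contraction property together with the smoothness of the partition assignment through the Gibbs soft membership. Substituting $Q=\lambda(\mu)$ into the slow o.d.e., one recovers exactly the ODA flow analyzed through Corollary \ref{crl:sa_equillibria} and Theorem \ref{thm:oda_sa}, whose asymptotically stable equilibria coincide with the solutions $\mu^*$ of (\ref{eq:minFstar}). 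Combining the two gives convergence of $(Q_n,\mu_n)\to(Q^*,\mu^*)$ almost surely. The final step is to note that $Q^*=\lambda(\mu^*)$ is precisely the minimizer of $J_h$ for each $h$, since the fixed-point condition $Q^*(h)=\E{C(x,u)+\gamma \min_u Q^*(h')\mid (x,u)\in P_h}$ is equivalent to $J_h=0$.

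The main obstacle is the Lipschitz dependence $\mu\mapsto\lambda(\mu)$ and, more subtly, ensuring that the assumption ``every $P_h$ visited infinitely often'' is compatible with the asynchronous Q-update: this is where the exploration policy $\pi_L$ and Corollary \ref{crl:sa_as} have to be invoked, and where the soft Gibbs assignments (rather than hard nearest-neighbor assignments) are crucial to avoid discontinuities in $f(\cdot,\mu)$ as the partition boundaries drift. A secondary subtlety is that the partition $\{P_h\}$ is itself changing in cardinality through the annealing/bifurcation events of Section \ref{sSec:bifurcation}; I would handle this by arguing that bifurcations occur only at a discrete set of temperatures and between two consecutive bifurcations the hypotheses of Theorem \ref{thm:borkar_timescales} apply unchanged, so a standard concatenation argument yields the almost-sure convergence claim.
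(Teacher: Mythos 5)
Your proposal follows essentially the same route as the paper's proof: cast Alg.~\ref{alg:QlearningODA} as a two-timescale stochastic approximation with the $Q$-update as the fast iterate (with $f(Q(h))$ proportional to $-\nabla_{Q(h)}J_h$, i.e., the aggregate Bellman residual) and the ODA update of $\mu$ as the slow iterate, then invoke Theorem~\ref{thm:borkar_timescales} together with Theorem~\ref{thm:oda_sa} for the slow o.d.e.'s equilibria. Your version is in fact considerably more detailed than the paper's three-sentence argument --- in particular your attention to the Lipschitz dependence $\mu\mapsto\lambda(\mu)$, the role of Corollary~\ref{crl:sa_as} and the infinitely-often visitation assumption, and the concatenation argument across bifurcation events are all points the paper leaves implicit --- but the underlying decomposition and key theorem are identical.
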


\begin{proof}
From Theorem \ref{thm:oda_sa}, it follows that Algorithm (\ref{alg:ODA}) is a  
stochastic approximation algorithm of the form
(\ref{eq:sa_timescales_y}) that converges 
to a solution of (\ref{eq:minFstar}). 
It is easy to see that the $Q$ function update in 
Alg. \ref{alg:QlearningODA} 
is also a stochastic approximation
algorithm of the form (\ref{eq:sa_timescales_x}), 
for $f(Q(h)) = -\nabla_{Q(h)} J_h$.
The result follows from Theorem \ref{thm:borkar_timescales}.
\end{proof}

We note that the condition 
$\nicefrac{\beta_i}{\alpha_i}\rightarrow 0$ is of great importance.
Intuitively, Algorithm \ref{alg:QlearningODA} 
consists of two components running in different timescales.
The slow component updates $\mu$ 
and is viewed as quasi-static 
when analyzing the behavior of the fast transient $Q$ which 
updates the approximation of the quality function.
As an example, the condition $\nicefrac{\beta_n}{\alpha_n}\rightarrow 0$
is satisfied by stepsizes of the form 
$(\alpha_n,\beta_n)=(\nicefrac 1 n, \nicefrac{1}{1+n \log n})$, or
$(\alpha_n,\beta_n)=(\nicefrac{1}{n^{\nicefrac 2 3}}, \nicefrac{1}{n})$.
Another way of achieving the two-timescale effect is to 
run the iterations for the slow component $\cbra{\mu_n}$ 
with stepsizes $\cbra{\alpha_{n(k)}}$, 
where $n(k)$ is a subsequence of $n$ that becomes increasingly rare
(i.e. $n(k+1)-n(k)\rightarrow\infty$), 
while keeping its values constant between these instants.
In practice, it has been observed that a good policy is to 
run the slow component with slower stepsize schedule $\beta_n$ and 
update it along a subsequence keeping its values constant in between
(\cite{borkar2009stochastic}, Ch. 6).
This explains the parameter $N$ in Alg. \ref{alg:QlearningODA}
whose value should increase with time.

\begin{remark}
Alg. \ref{alg:QlearningODA} is essentially based on successive entropy-regularized 
reinforcement learning problems. 
However, the entropy is defined with respect to the learning representation in the 
state-action space $\mathcal{X}\times \mathcal{U}$. 
As such, this approach is not to be directly compared to common entropy-regularized
approaches as in \cite{haarnoja2017}, and related methods, such as PPO \cite{schulman2017proximal}.
The deeper connection to risk-sensitive 
reinforcement learning can be studied 
along the lines of \cite{mavridis2022risk} and \cite{noorani2023risk}.
\end{remark}

\section{Experimental Evaluation and Discussion}
\label{Sec:Results}

We illustrate the properties and evaluate the performance 
of the proposed algorithm in 
supervised, unsupervised, and reinforcement learning problems.

\subsection{Supervised and Unsupervised Learning}

\begin{figure*}[h]
\centering
\includegraphics[trim=0 0 0 0,clip,width=0.19\textwidth]{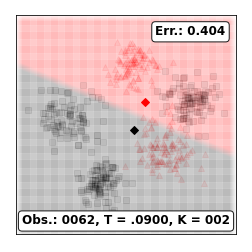}
\includegraphics[trim=0 0 0 0,clip,width=0.19\textwidth]{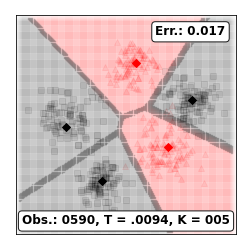}
\includegraphics[trim=0 0 0 0,clip,width=0.19\textwidth]{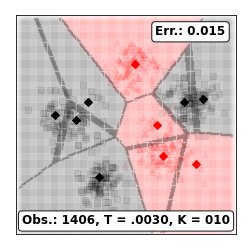}
\includegraphics[trim=0 0 0 0,clip,width=0.19\textwidth]{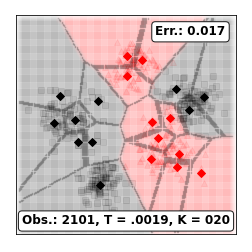}
\includegraphics[trim=0 0 0 0,clip,width=0.19\textwidth]{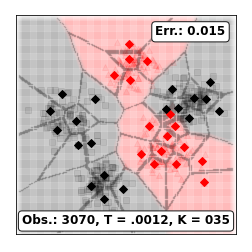}
\caption{Evolution of the ODA algorithm in 2D binary classification based on class-conditional density estimation.
Temperature $T$ decreases from left to right.}
\label{fig:domain-1}
\end{figure*}

\begin{figure}[h]
\centering
\includegraphics[trim=0 0 0 0,clip,width=0.24\textwidth]{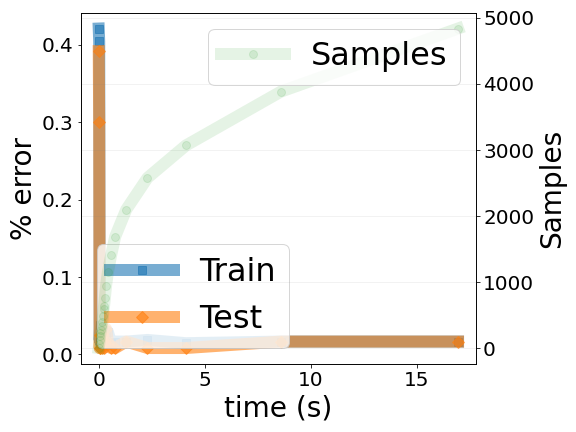}
\includegraphics[trim=0 0 0 0,clip,width=0.24\textwidth]{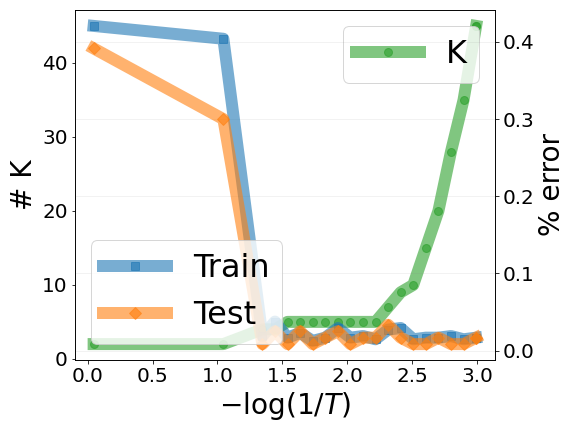}
\caption{Performance curves for the problem of Fig. \ref{fig:domain-1}.}
\label{fig:domain-1pc}
\end{figure}

\begin{figure}[h]
\centering
\centering
\includegraphics[trim=280 200 250 220,clip,width=0.16\textwidth]{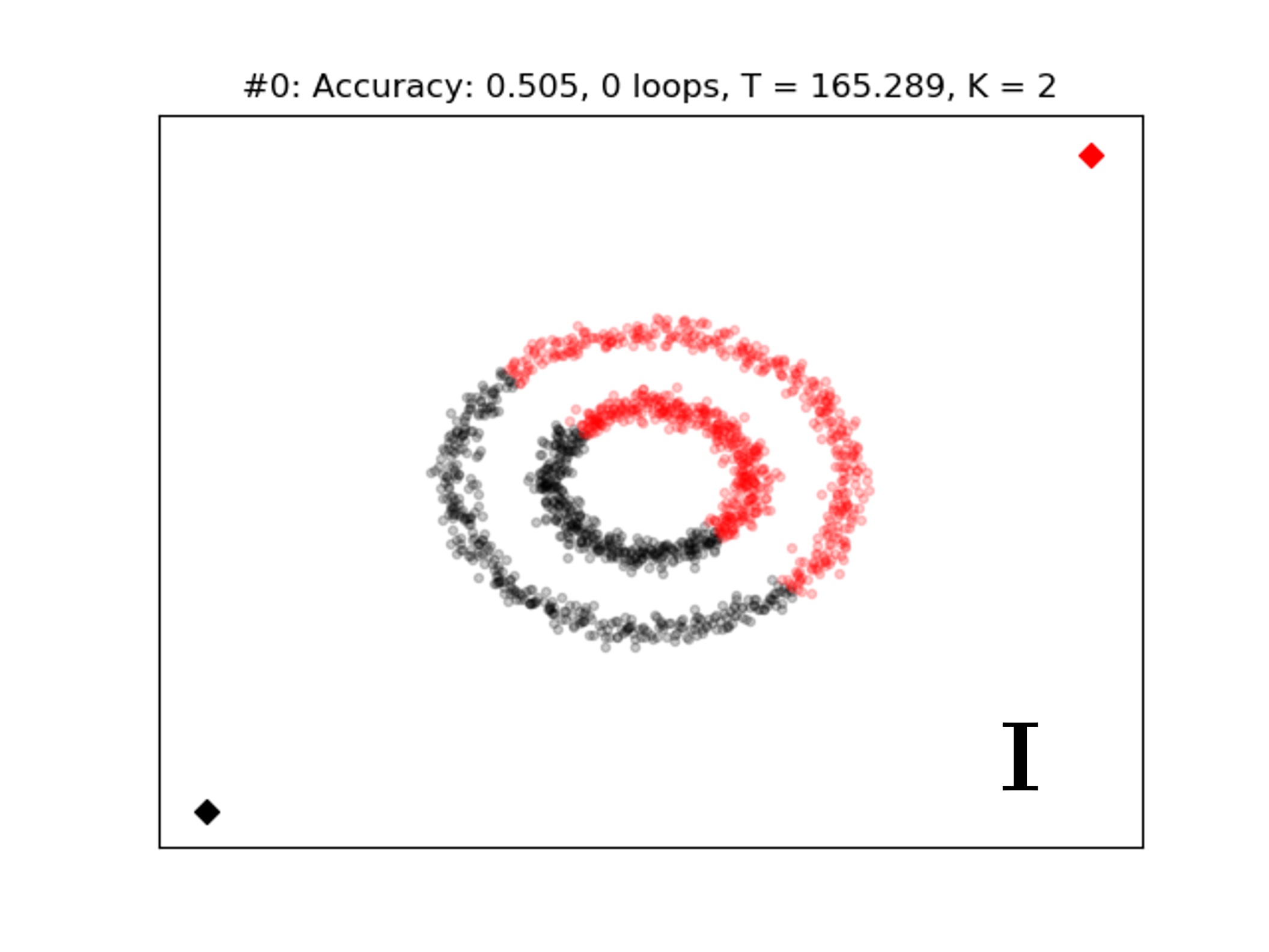}
\hspace{1em}
\includegraphics[trim=280 200 250 220,clip,width=0.16\textwidth]{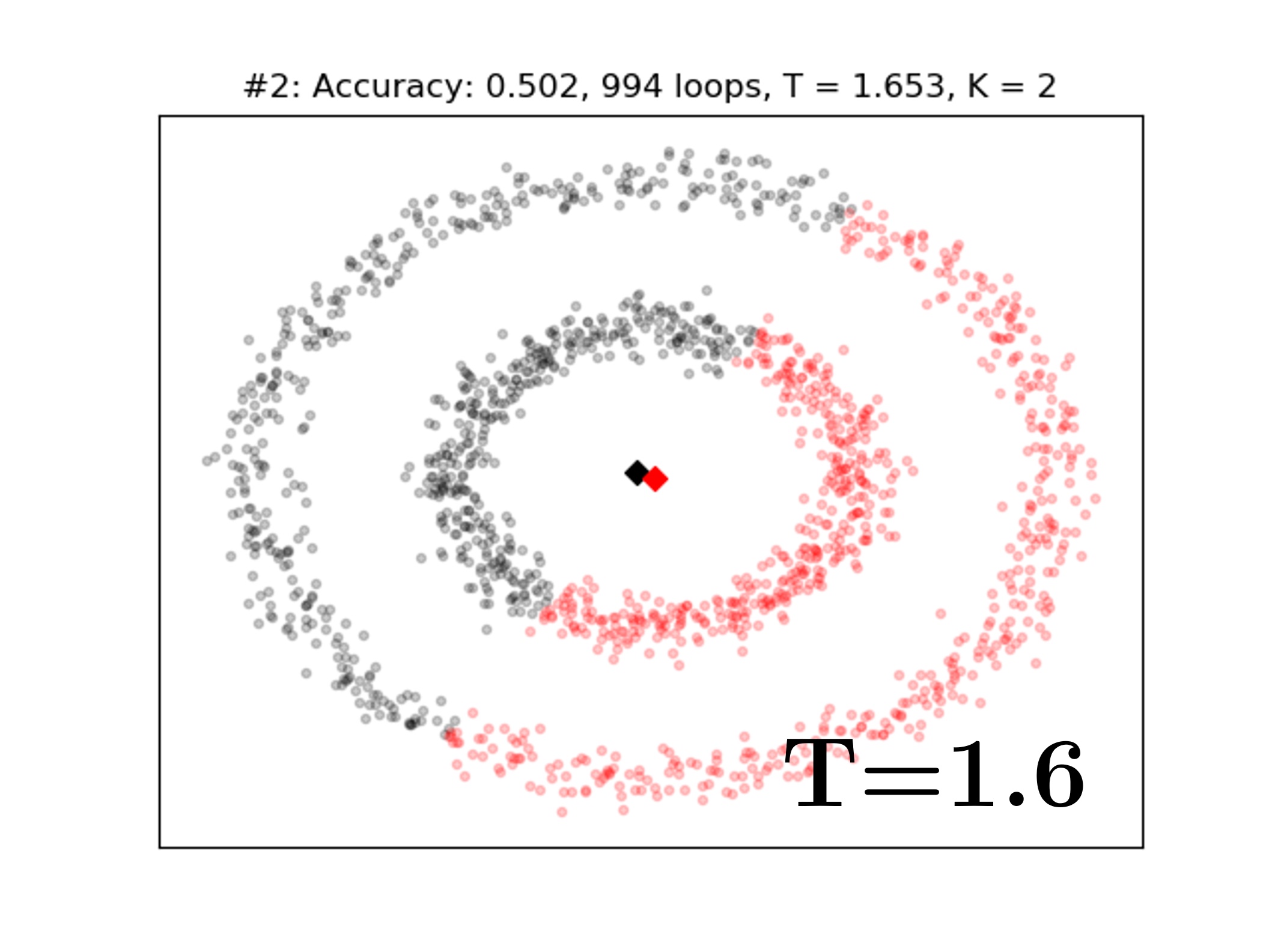}
\includegraphics[trim=280 200 250 220,clip,width=0.16\textwidth]{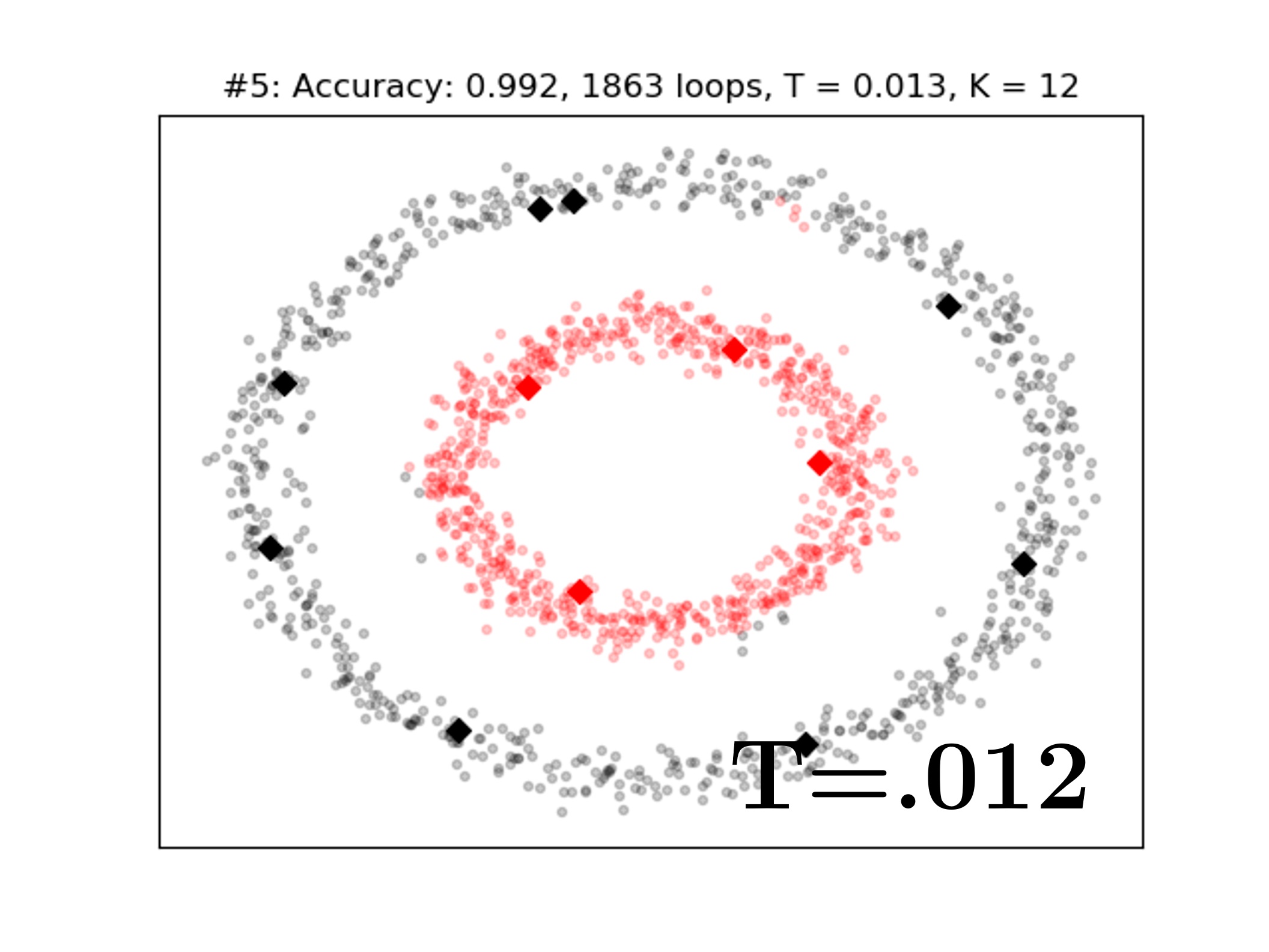}
\hspace{1em}
\includegraphics[trim=280 200 250 220,clip,width=0.16\textwidth]{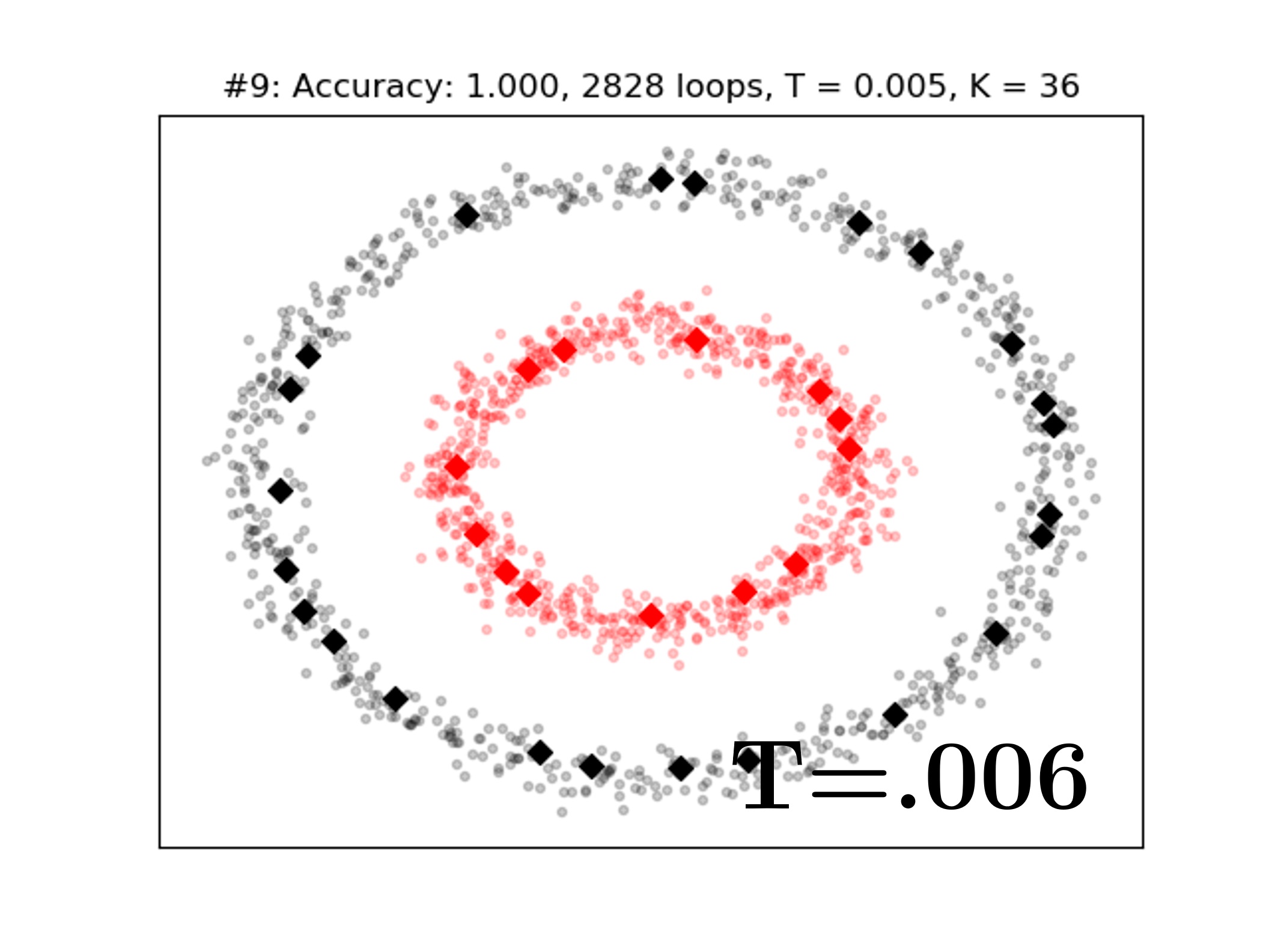}
\caption{Illustration of the evolution of Alg. \ref{alg:ODA} 
for binary classification in 2D based on class-conditional density estimation.
Showcasing robustness to bad initial conditions which idicates poor local minima prevention.}
\label{fig:illustration}
\end{figure}

We first illustrate the properties of Alg. \ref{alg:ODA}, 
in a classification problem where
the data samples were sampled from a mixture of 2D 
Gaussian distributions.
In Fig. \ref{fig:domain-1} and \ref{fig:domain-1pc}, the temperature level (the values of $T$ shown are the normalized values $\lambda=\nicefrac{1}{T+1}$),
the average distortion of the model, 
the number of codevectors (neurons) used, 
the number of observations (data samples) used for convergence, as well as the overall time, are shown.
Since the objective is to give a geometric illustration 
of how the algorithm works in the two-dimensional plane, 
the Euclidean distance is used as the proximity measure.
Notice that the classification accuracy for $K=5$ is $98\%$ and it gets to 
$100\%$ only when we reach $K=144$. 
This showcases the performance-complexity trade-off that Alg. \ref{alg:ODA}
allows the user to control in an online fashion.
Since classification and clustering are handled in a similar way by Alg. \ref{alg:ODA},
these examples properly illustrate the behavior of the proposed methodology for clustering as well.
In Fig. \ref{fig:illustration}, the progression of the learning representation is depicted 
for a binary classification problem with
underlying class distributions shaped as
concentric circles.
The algorithm starts at high temperature 
with a single codevector for each class. 
Here the codevectors are poorly initialized 
outside the support of the data, which is not assumed known a priori
(e.g. online observations of unknown domain).
In this example the LVQ algorithm has been shown to fail \cite{aLaVigna_LVQconvergence_1990}.
This showcases the
robustness of the proposed algorithm with respect to 
the initial configuration.
This is an example of poor local minima prevention which, although not theoretically guaranteed,  is a known property of annealing optimization methods.
%


%
\begin{figure}[h]
\centering
\begin{subfigure}[b]{0.24\textwidth}
\centering
\includegraphics[trim=0 0 0 0,clip,width=\textwidth]{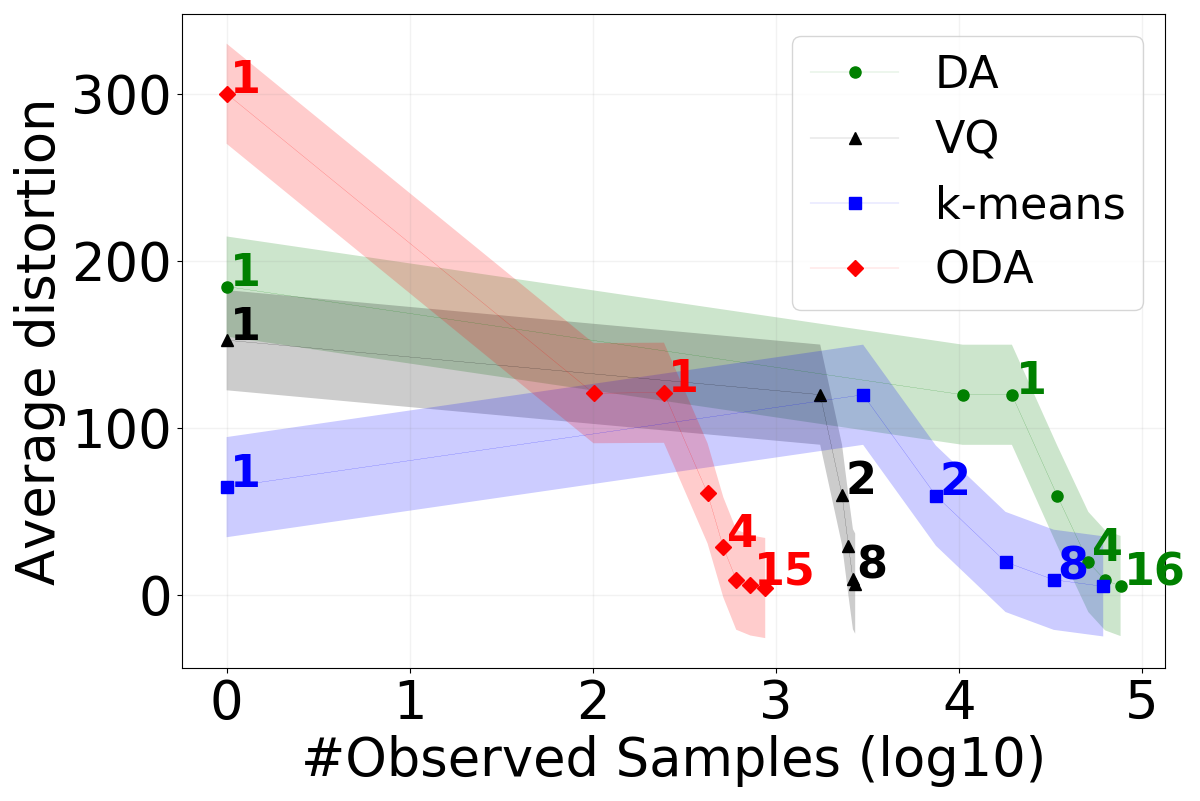}
\caption{Gaussians.}
\label{sfig:clustering_blobs}
\end{subfigure}
%
%
\hfill
\begin{subfigure}[b]{0.24\textwidth}
\centering
\includegraphics[trim=0 0 0 0,clip,width=\textwidth]{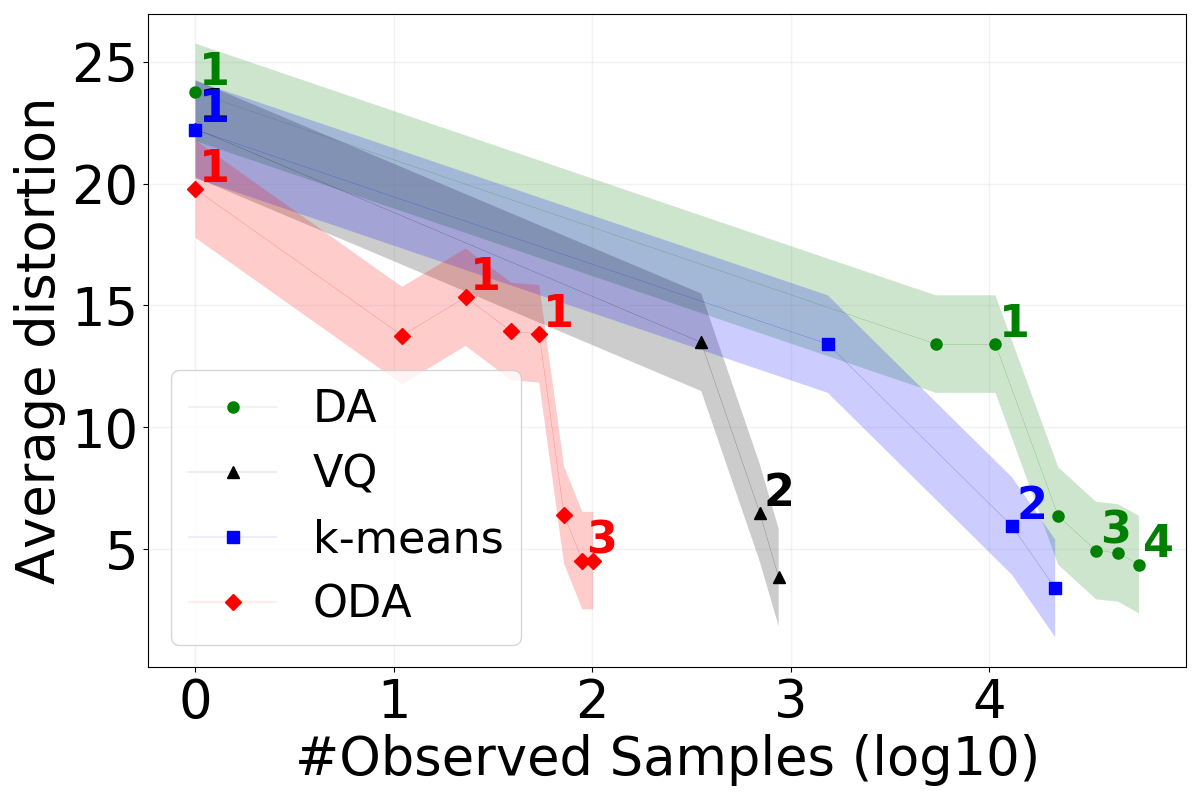}
\caption{PIMA.}
\label{sfig:clustering_pima}
\end{subfigure}
%
%
\caption{Algorithm comparison for unsupervised learning.}
\label{fig:clustering}
\end{figure}

For clustering, we consider 
the dataset of Fig. \ref{fig:domain-1} (Gaussians), and
the PIMA dataset \cite{smith1988using}.
%
In Fig. \ref{fig:clustering}, 
we compare Alg. \ref{alg:ODA}
with the stochastic (online) vector quantization (sVQ) algorithm (\cite{aLaVigna_LVQconvergence_1990}),
and two offline (batch) algorithms, namely
$k$-means \cite{bottou1995convergence},
and the original 
deterministic annealing (DA) algorithm \cite{rose1998deterministic}.
The algorithms are compared in terms of the minimum average distortion 
achieved, as a function of the number of samples they observed,
and the number of clusters they used.
The Euclidean distance is used for fair comparison.
Since there is no criterion to decide the number of clusters $K$
for $k$-means and sVQ, we run them sequentially for the $K$ values
estimated by DA, and add up the computational time.
All algorithms are able to achieve 
comparable average distortion values given good initial conditions 
and appropriate size $K$. 
Therefore, the progressive estimation of $K$, as well as the robustness 
with respect to the initial 
conditions, are key features of both annealing algorithms.
Compared to the offline algorithms, i.e., $k$-means and DA, 
ODA and sVQ achieve practical convergence
with significantly lower number of 
observations, which corresponds to 
reduced computational time, as argued above. 
Compared to the online sVQ (and LVQ), 
the probabilistic approach of ODA introduces additional computational cost:  
all neurons are now updated in every iteration, 
instead of only the winner neuron.
However, the updates can still be computed relatively fast when using
Bregman divergences (Theorem \ref{thm:bregman_in_DA}). 
%
For more experimental results regarding clustering and classification, 
the authors are referred to \cite{mavridis2022online,mavridis2021progressive,mavridis2022sparse}.

\subsection{Reinforcement Learning}

Finally, we validate the proposed methodology 
on the inverted pendulum (Cart-pole) optimal control problem.
The state variable of the cart-pole system 
has four components $(x,\theta,\dot x,\dot\theta)$, where
$x$ and $\dot x$ are the position and velocity of the cart on the track, and 
$\theta$ and $\dot \theta$ are the angle and angular velocity 
of the pole with the vertical.
%
The cart is free to move within
the bounds of a one-dimensional track. The pole is free to
move only in the vertical plane of the cart and track. 

%
%

The action space consists of an impulsive ``left'' or ``right'' force 
$F\in \{-10,+10\}$N of fixed magnitude to the cart at discrete time intervals.
The cart-pole system is modeled by the following 
nonlinear system of differential equations \cite{barto1983neuronlike}:
{\small
\begin{align*}
\ddot x &= \frac{F + m l \pbra{\dot \theta^2\sin\theta -\ddot \theta \cos \theta}
				\mu_c sgn(\dot x)}
			{m_c + m} \\
\ddot \theta &= \frac{g\sin\theta + \cos\theta 
				\pbra{ \frac{-F - ml\dot\theta^2\sin\theta + \mu_c sgn(\dot x)}
					{m_c+m} }- \frac{\mu_p}{ml}
				    \mu_c sgn(\dot x)}
			{l \pbra{ \frac{4}{3} -\frac{m\cos^2 \theta}{m_c+m} }}			
\end{align*}
}
where the parameter values for $g,m_c,m,l,\mu_c,\mu_p$ can be found in 
\cite{barto1983neuronlike}.
%
%
%
The transition function for the state $x$ is 
$x_{n+1} = x_n + \tau \dot x$, 
%
%
where $\tau=0.02$s. 
The initial state is set to 
$X_0=(u_x,u_\theta,u_{\dot x},u_{\dot\theta})$
where $u_x$, $u_\theta$, $u_{\dot x}$, and 
$u_{\dot\theta}$ follow a uniform distribution $U(-0.05,0.05)$. 
Failure occurs when $|\theta|>12^\circ$ or when $|x|>2.4$m. 
An episode terminates successfully after $N_t$ timesteps, and the 
average number of timesteps $\hat N_t \leq N_t$ 
across different attempts, is used 
to quantify the performance of the learning algorithm.
We use the Euclidean distance as the Bregman divergence $d_\phi$.

In Fig. \ref{fig:oda_results} 
we compare the 
average number of timesteps 
(here $N_t=1000$) with respect to the number of aggregate states used,
for three different state aggregation algorithms.
The first one is naive discretization without state aggregation, 
the second is the SOM-based algorithm proposed in \cite{mavridis2021vector},
and the last is the proposed algorithm Alg. \ref{alg:QlearningODA}.
We initialize the codevectors $\mu$ by uniformly discretizing over 
$\hat S\times \cbra{-10,10}$, for 
$\hat S = [-1,1]\times[-4,4]\times[-1,1]\times[-4,4]$.
We use  
$K\in\cbra{16,81,256,625}$ clusters, 
corresponding to a standard discretization scheme
with only $n\in\cbra{2,3,4,5}$ bins for each dimension.
As expected, state aggregation outperforms standard discretization
of the state-action space.
The ability to progressively adapt the number 
and placement of the centroids of the aggregate states 
is an important property of the proposed algorithm.
Fig. \ref{fig:oda_results} shows $5$ instances
of Alg. \ref{alg:QlearningODA} for $5$ different
stopping criteria according to a pre-defined minimum temperature $T_{min}$.
This results in different representations of the state space with  
$K\in\cbra{56,118,136,202,252}$ aggregate states, respectively.
%

%
\begin{figure}[h]
\centering
\includegraphics[trim=0 0 0 60,clip,width=0.4\textwidth]%
								{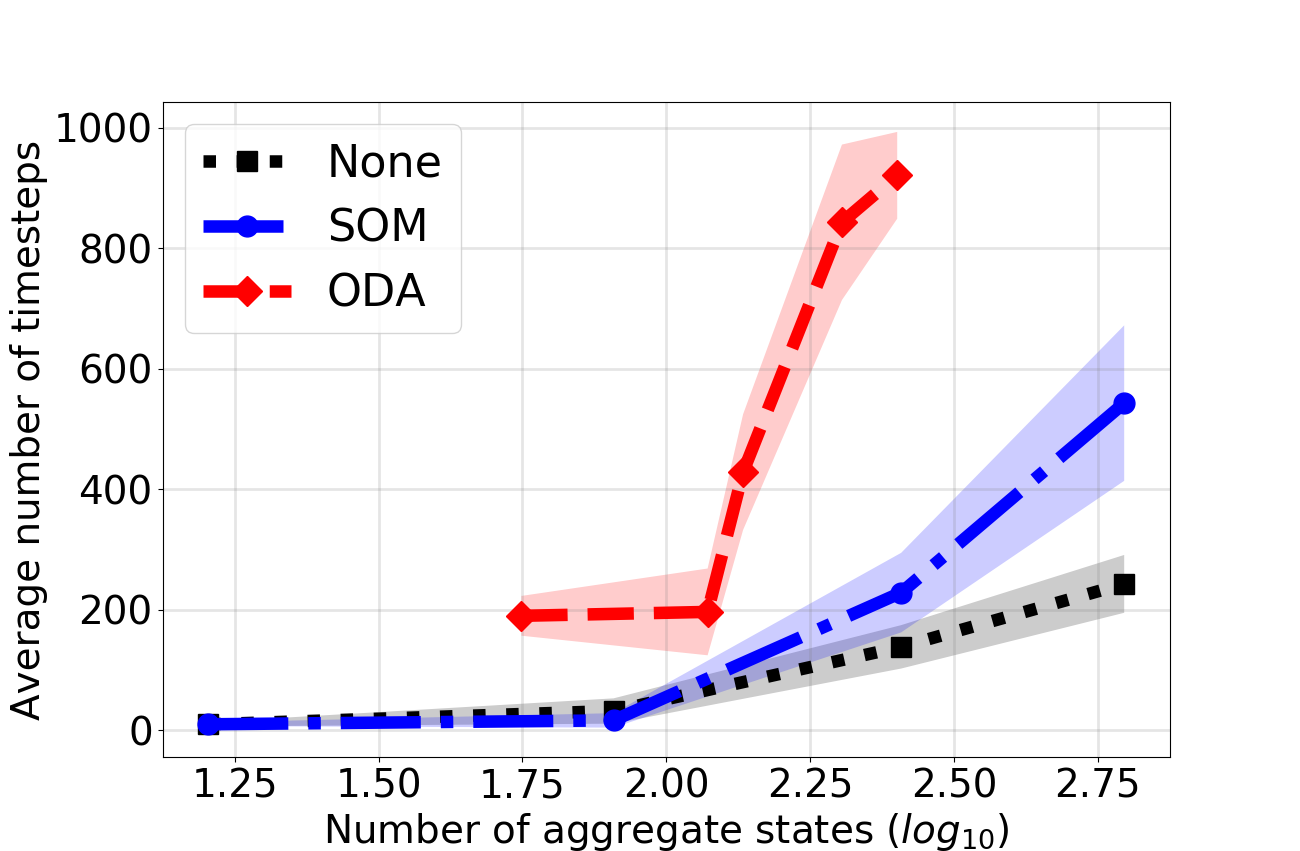}
\caption{Average number of timesteps 
	($N_t=1000$) over number of aggregate states used.
	(red) the proposed algorithm. (black) 
	$Q$-learning without state aggregation. (blue) 
	the SOM-based algorithm of \cite{mavridis2021vector}.}
\label{fig:oda_results}
\end{figure}
%

\section{Conclusion}
\label{Sec:Conclusion}

%
We investigate the properties of learning with progressively growing models,
and propose an online annealing optimization approach
as a learning algorithm
that progressively adjusts its complexity with respect to new observations, 
offering online control over the performance-complexity trade-off.
We show that the proposed algorithm constitutes a 
progressively growing competitive-learning neural network
with inherent regularization mechanisms, 
the learning rule of which is formulated as an online 
gradient-free stochastic approximation algorithm.
%
%
The use of stochastic approximation enables the study 
of the convergence of the learning algorithm through 
mathematical tools from dynamical systems and control, 
and allows for its use in
supervised, unsupervised, and reinforcement learning settings.
In addition, the annealing nature of the algorithm,
contributes to poor local minima prevention and
offers robustness with respect to the initial conditions.
To our knowledge, this is the first time such a progressive approach 
has been proposed for machine learning and reinforcement learning applications.
These results can lead to new developments
in the development of 
progressively growing machine learning models
targeted towards applications in which computational resources are limited and robustness
and interpretability are prioritized.

\bibliographystyle{IEEEtran} %
\bibliography{bib_learning.bib,bib_mavridis.bib}


\vspace{-3em}

\begin{IEEEbiography}[{\includegraphics[width=1in,height=1.25in,clip,keepaspectratio]
{./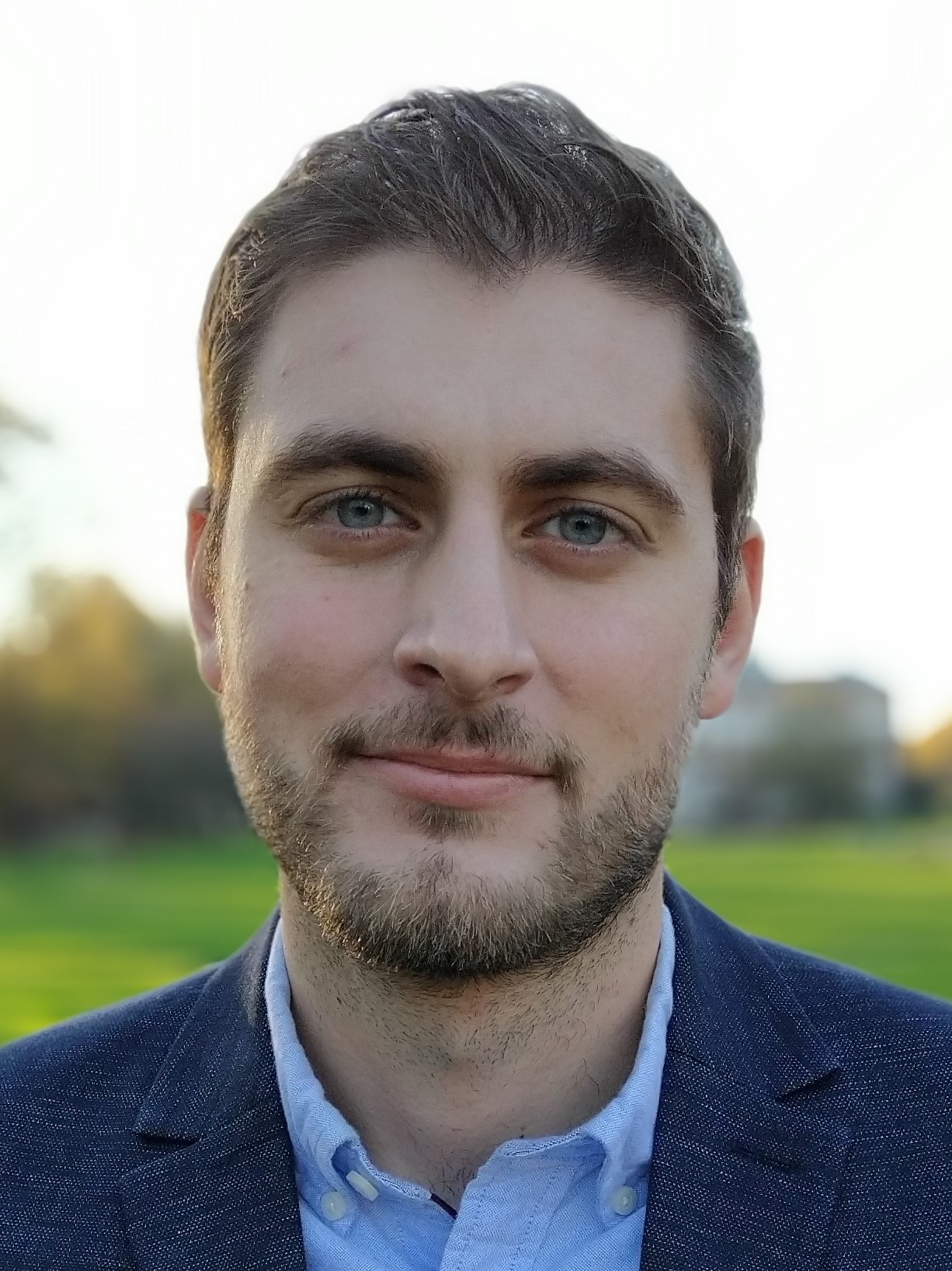}}]{Christos N. Mavridis} (M'20) 
received the Diploma degree in electrical and computer engineering from the National Technical University of Athens, Greece, in 2017,
and the M.S. and  Ph.D. degrees in electrical and computer engineering at the University of Maryland, College Park, MD, USA, in 2021. 
His research interests include systems and control theory, stochastic optimization,  
learning theory, multi-agent systems, and robotics. 

He has served as a postdoctoral associate at the University of Maryland, and a visiting postdoctoral fellow at KTH Royal Institute of Technology, Stockholm. He has worked as a research intern for the Math and Algorithms Research Group at Nokia Bell Labs, NJ, USA, and the System Sciences Lab at Xerox Palo Alto Research Center (PARC), CA, USA. 

Dr. Mavridis is an IEEE member, and a member of the Institute for Systems Research (ISR) and the Autonomy, Robotics and Cognition (ARC) Lab. He received the Ann G. Wylie Dissertation Fellowship in 2021, and the A. James Clark School of Engineering Distinguished Graduate Fellowship, Outstanding Graduate Research Assistant Award, and Future Faculty Fellowship, in 2017, 2020, and 2021, respectively. He has been a finalist in the Qualcomm Innovation Fellowship US, San Diego, CA, 2018, and he has received the Best Student Paper Award (1st place) in the IEEE International Conference on Intelligent Transportation Systems (ITSC), 2021.
\end{IEEEbiography}

\vspace{-4em}

\begin{IEEEbiography}[{\includegraphics[width=1in,height=1.25in,clip,keepaspectratio]
{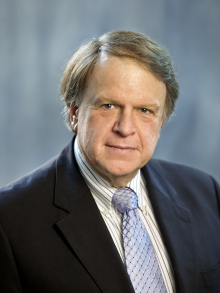}}]{John S. Baras} (LF'13) 
received the Diploma degree in electrical and mechanical engineering from the National Technical University of Athens, Athens, Greece, in 1970, and the M.S. and Ph.D. degrees in applied mathematics from Harvard University, Cambridge, MA, USA, in 1971 and 1973, respectively.

He is a Distinguished University Professor and holds the Lockheed Martin Chair in Systems Engineering, with the Department of Electrical and Computer Engineering and the Institute for Systems Research (ISR), at the University of Maryland College Park. From 1985 to 1991, he was the Founding Director of the ISR. Since 1992, he has been the Director of the Maryland Center for Hybrid Networks (HYNET), which he co-founded. His research interests include systems and control, optimization, communication networks, applied mathematics, machine learning, artificial intelligence, signal processing, robotics, computing systems, security, trust, systems biology, healthcare systems, model-based systems engineering.

Dr. Baras is a Fellow of IEEE (Life), SIAM, AAAS, NAI, IFAC, AMS, AIAA, Member of the National Academy of Inventors and a Foreign Member of the Royal Swedish Academy of Engineering Sciences. Major honors include the 1980 George Axelby Award from the IEEE Control Systems Society, the 2006 Leonard Abraham Prize from the IEEE Communications Society, the 2017 IEEE Simon Ramo Medal, the 2017 AACC Richard E. Bellman Control Heritage Award, the 2018 AIAA Aerospace Communications Award. In 2016 he was inducted in the A. J. Clark School of Engineering Innovation Hall of Fame. In 2018 he was awarded a Doctorate Honoris Causa by his alma mater the National Technical University of Athens, Greece.   
\end{IEEEbiography}

\end{document}